\documentclass[runningheads]{llncs}
\usepackage[T1]{fontenc}
\usepackage{graphicx}
\usepackage{setspace} % 行距控制

\makeatletter
\renewcommand{\normalsize}{%
   \@setfontsize\normalsize{11}{13.2}% % 11pt字号，13.2pt基础行距(1.2倍)
   \abovedisplayskip 11\p@ \@plus3\p@ \@minus6\p@
   \abovedisplayshortskip \z@ \@plus3\p@
   \belowdisplayshortskip 6.5\p@ \@plus3.5\p@ \@minus3\p@
   \belowdisplayskip \abovedisplayskip
   \let\@listi\@listI}
\makeatother

\usepackage{etoolbox} % 提供环境修改工具
\patchcmd{\abstract}{\small}{\normalsize}{}{} % 将\small改为\normalsize
\usepackage[margin=1in]{geometry}
\usepackage{amsmath,amsfonts,amssymb}
\usepackage{booktabs}
\usepackage{pifont}
\usepackage[ruled,linesnumbered,vlined]{algorithm2e}
\usepackage{float}
\usepackage{mathrsfs}
\usepackage{multirow,multicol}
\usepackage{subfigure}
\usepackage{enumitem}
\usepackage[all]{xy}
\usepackage{xcolor}
\usepackage{bbm}
\usepackage{makecell}
\usepackage{pdfpages}
\usepackage{hyperref}
\usepackage{times}
\usepackage{soul}
\usepackage{url}
\usepackage{newfloat}
\usepackage{listings}
\usepackage{tikz}
\usepackage{appendix}
\usepackage{lineno}
\usepackage{marvosym}

\def\nil{\mathbf{nil}}

\renewcommand\ge{\geqslant}

\SetAlgorithmName{Mechanism}

\begin{document}
\title{Probabilistic Mechanism Design in Diffusion Auctions}

%\titlerunning{Abbreviated paper title}
% If the paper title is too long for the running head, you can set
% an abbreviated paper title here
%
    \author{Xinlun Zhang\inst{1} \and
        Zhechen Li\inst{2} \and
        Yongzhi Cao\inst{1}\Letter \and
        Yu Huang\inst{3} \and
        Hanpin Wang\inst{1}
        }
	
	% Address/affiliation

    \authorrunning{X. Zhang et al.}
    % First names are abbreviated in the running head.
    % If there are more than two authors, 'et al.' is used.
    %
    \institute{Key Laboratory of High Confidence Software Technologies, School of Computer Science, Peking University, China \\
    \email{\{ec\_lol, caoyz, whpxhy\}@pku.edu.cn} 
    \and
    Ant Group, Beijing, China \\
    \email{lizhechen@pku.edu.cn} 
    \and National Engineering Research Center for Software Engineering, Peking University, China \\
    \email{hy@pku.edu.cn}}
    %\email{xialirong@gmail.com} \and
    %School of Computer Science and Cyber Engineering, Guangzhou University, China
    %}

\maketitle

\begin{abstract}
    A diffusion auction refers to a selling process conducted over a social network, where each participant submits a bid and may invite other potential buyers to join the auction. Although various mechanisms have been proposed, none of them can simultaneously achieve incentive compatibility, non-negative revenue, and approximate efficiency with a constant approximation bound. In this paper, we propose the Probabilistic Diffusion Mechanism (PDM), a novel mechanism tailored for path graphs, which satisfies all three desired properties. We further extend PDM to general network structures through a map $f$, resulting in the $f$-PDM mechanism, which preserves the key properties of the original design. Beyond these, when $f$ satisfies properties such as breadth-first order, $f$-PDM also ensures Sybil-proofness and provides approximate revenue. Furthermore, to address buyer collusion, we introduce a modified version of the mechanism that balances collusion-proofness with revenue approximation. Finally, we extend the design to multi-unit diffusion auctions---a more challenging setting---and propose a simple yet effective mechanism, Multi-Unit PDM (MUPDM), that achieves approximate efficiency while maintaining IC. Moreover, we design Sybil-Proof MUPDM (SP-MUPDM) to resist Sybil attacks in the multi-item scenario.
    \keywords{Auction \and Mechanism Design \and Social Network \and Collusion-proofness \and Sybil-proofness}
\end{abstract}

\section{Introduction} 
% Diffusion auction, proposed by \cite {idm}, is an emerging field in auctions. Unlike traditional auctions, buyers in diffusion auctions not only need to bid, but also invite others to participate in the auction. On the one hand, more buyers bring higher social welfare and seller income. On the other hand, how to motivate buyers to invite others becomes a problem.
	The diffusion auction, introduced by Li et al. ~\cite{idm}, establishes a new paradigm in auction theory by operating over a social network. In this setting, each buyer is required not only to submit a bid, but also to recruit other participants through their social connections. Unlike conventional auctions---where the mechanism input consists solely of bids---a buyer’s report in a diffusion auction is a tuple that includes both a bid and a set of invited neighbors. Despite this structural difference, the design of diffusion mechanisms follows the same approach used in traditional auction theory, where mechanisms are evaluated against a set of desirable properties. Key properties include individual rationality (IR; truthful reporting guarantees non-negative utility), incentive compatibility (IC; truth-telling is a dominant strategy), weak budget balance (WBB; the seller’s revenue is non-negative), and efficiency (the item should be allocated to the highest-value buyer).
	%Diffusion auction, introduced by Li et al.~\cite{idm} is a new paradigm in auction theory. As the name suggests, it is conducted over a {\em social network}. Consequently, each buyer in any certain diffusion auction is required to recruit other participants through their social connections. Unlike conventional auction mechanisms, which only collects the bids of buyers as input, the profile that a buyer should submit to a diffusion auction mechanism is a tuple consisting of a bid and a group of participants. Although the inputs are different, the mechanism design of diffusion auction follows the same paradigm as the conventional auction, i.e., to evaluate mechanism through {\em axiomatic properties}. Popular properties includes individual rationality (truthfully reports leads to non-negative utility), incentive compatibility (which requires telling the truth to be a dominant strategy), and efficiency (the highest bidder wins). 
	% In the past few years, many diffusion auction mechanisms have been proposed. All of them were evaluated by the properties.
	% 想到哪写到哪，写得不太好 你自己改改。
	
	Due to the difference on inputs, the requirement of these properties are somehow ``enhanced'' for diffusion auctions. For example, for conventional auctions, the incentive compatibility only requires the buyers to report their bids truthfully. In diffusion auctions, it requires buyers to truthfully report not only their valuations but also their neighbors. However, buyers often lack the motivation to invite others, as increased competition from additional participants may reduce their own utility. A central challenge in diffusion auction design is therefore to create mechanisms that motivate buyers to actively diffuse auction information.
	
	Beyond the properties mentioned above, diffusion auctions should also address strategic behaviors such as Sybil attacks and collusion. A Sybil attack occurs when a buyer creates fake identities (called Sybil identities) to gain an unfair advantage, while collusion involves a group of buyers (forming a cartel) coordinating to increase their collective utility. Currently, few mechanisms provide Sybil-proof guarantees, and virtually none offer collusion resistance. This highlights a significant gap in the design of secure diffusion auction mechanisms.
	% 与传统拍卖不同的是，扩散拍卖中的IC不仅要求买家诚实报告自己的估值，还要求买家诚实报告自己的邻居。然而，买家往往没有动机邀请他人，因为更多的买家带来了更多的竞争，可能会减少自己的效用。因此，扩散拍卖一个关键挑战是如何设计合理的机制。除了上述性质之外，扩散拍卖还需要考虑如何防止女巫攻击、合谋攻击等策略性行为。女巫攻击是指一个买家通过创建虚假身份（称为女巫身份）来获得不公平的优势，而合谋攻击是指多个买家（称为卡特尔）进行合谋来提升他们整体的效用。然而，很少机制满足防女巫攻击，甚至没有机制能够抵抗合谋，在设计安全机制方面的研究还有很大缺失。
	
	In other words, the following question is still largely open.
	\begin{center}
		\em How can we design mechanisms that satisfy a broader set of desirable properties?
	\end{center}
	
	In addition, extending from the design of single-item auction mechanisms, we generalize the framework to multi-item settings while preserving the original properties.

	% Li等人提出的扩散拍卖是拍卖理论中的一种新范式。正如它的名字所展现的，它是在社会网络基础上构建的拍卖。因此，扩散拍卖要求买家不仅提交出价，而且还通过其社会关系积极招募其他参与者，这弥补了传统拍卖中买家不知道拍卖的存在而没有参加所导致的社会福利和卖家收入的下降的不足。
	
	%与传统拍卖类似，机制设计是十分关键的一部分。在机制设计中，衡量
	
	%然而，买家往往没有动机邀请他人，因为更多的买家带来了更多的竞争，可能会减少自己的效用。因此，扩散拍卖一个关键挑战是设计有效的激励机制，鼓励买家如实报告自己的估值并在社交网络中传播拍卖信息，这被称为激励相容（IC）。
	
	%除了IC之外，其他传统拍卖的一些基本性质也是评价扩散拍卖机制的指标。如个体理性（IR，买家诚实报告估值的效用非负）、弱预算平衡（WBB，卖家收入非负）、有效的（出价最高的买家赢得物品）。然而，李等人表明，不存在同时满足上述4个性质的扩散拍卖机制。如表1所示，目前的机制主要牺牲了有效性来换取激励相容，也有少部分放弃激励相容来满足有效性。然而，目前还没有机制做到同时满足三个性质并且对其余一个性质有常数近似的保证。
	
	%扩散拍卖的设计不仅要考虑上述基本性质，还需要考虑如何防止女巫攻击、合谋攻击等策略性行为。女巫攻击是指一个买家通过创建虚假身份（称为女巫身份）来获得不公平的优势，而合谋攻击是指多个买家（称为卡特尔）进行合谋来提升他们整体的效用。然而，很少机制满足防女巫攻击，甚至没有机制能够抵抗合谋，在设计安全机制方面的研究还有很大缺失。
	
	% \begin{center}
		%     \em Is there any mechanism satisfying a, b, and c?
		% \end{center}
	
	%我们尝试设计满足更多性质的机制——无论是近似的社会福利和卖家收入，还是防女巫攻击、防合谋攻击等安全保障。
	%

	% Contribution: a, b, c
	
	\subsection{Related Work}
	For single-unit diffusion auctions, Li et al. showed that the VCG mechanism \cite{vcg1,vcg2,vcg3} guarantees IC but may violate weak budget balance. They then proposed the information diffusion mechanism (IDM) as a solution that achieves both IC and WBB \cite{idm}. Subsequently, Li et al. ~\cite{cdm1,cdm2} generalized IDM to a class of mechanisms, termed the critical diffusion mechanism (CDM), which retain the same properties as IDM. However, both IDM and CDM only reward buyers who play critical roles in diffusion. According to the small-world theorem ~\cite{smallworld}, the probability of being a critical node in a social network is extremely low, resulting in limited incentives for buyers to participate in the diffusion process. To solve this problem, the fair diffusion mechanism (FDM) \cite {fdm} and the network-based redistribution mechanism (NRM) \cite {nrm} were proposed.
	
	%在单物品扩散拍卖的机制设计中，李等人首先说明了VCG机制是IC但可能导致卖家亏损，进而提出了满足IC和WBB的IDM机制。
	%在满足基本性质的基础上，很多工作尝试设计满足额外性质的机制。Chen等人设计了防女巫攻击的机制the Sybil tax mechanism (STM) and the Sybil cluster mechanism (SCM)，Jia等人设计了满足差分隐私的机制recursive DPDM and layered DPDM，Jeong and Lee 提出的 the groupwise-pivotal referral (GPR) mechanism 机制则能够部分地防止合谋攻击。
	Building upon the basic properties, several follow‑up works have designed mechanisms that achieve additional desirable properties. For example, Chen et al. proposed Sybil‑resistant mechanisms---the Sybil tax mechanism (STM) and the Sybil cluster mechanism (SCM) ~\cite{sybil2}. Zhang et al. modified Myerson's mechanism \cite{myerson1981} and proposed the $k$‑partial winner of Myerson's mechanism to provide an approximation of the seller's revenue \cite{optimal}. However, this mechanism requires the distribution of buyers' valuations to be common knowledge. Jia et al. introduced mechanisms satisfying differential privacy, namely the recursive DPDM and the layered DPDM ~\cite{dpdm2,dpdm1}. Meanwhile, the groupwise‑pivotal referral (GPR) mechanism by Jeong and Lee provides partial protection against collusion ~\cite{gpr}.
	%When designing diffusion auction mechanisms, additional properties, such as Sybil-proofness, differential privacy, and collusion-proofness are considered.
	%In diffusion auctions, Sybil attacks occur when a buyer creates multiple identities to gain excessive benefits, as mechanisms may reward buyers for participating in diffusion. To address this issue, two Sybil-proof mechanisms, the Sybil tax mechanism (STM) and the Sybil cluster mechanism (SCM) ~\cite{sybil2}, were proposed.
	
	%In addition to being vulnerable to 
	%Beyond Sybil attacks, diffusion auctions face privacy leakage risks. To counter this issue, differential privacy ~\cite{dp} ensures similar inputs produce statistically similar outputs. Jia et al. ~\cite{dpdm2,dpdm1} proposed two differentially private mechanisms (recursive DPDM and layered DPDM), which are the only stochastic mechanisms in single-unit diffusion auctions to date. Although layered DPDM offers a lower bound on expected social welfare, this bound depends on the network structure and may result in no buyer obtaining the item.
	
	%Collusion-proofness ensures that multiple buyers cannot collude to gain more benefits, which has been studied in auctions. Diffusion auctions are also vulnerable to collusion, but no current mechanism effectively tackles this challenge. In IDM, multiple buyers may collude and become one single buyer, thereby becoming critical nodes and gaining benefits. Jeong and Lee \cite {gpr} proposed  the groupwise-pivotal referral (GPR) mechanism and demonstrated that it could partially prevent collusion. However, the mechanism is not IC.
	
	In addition to positive results, there are significant limitations in diffusion auction mechanisms. Li et al. ~\cite{ic} demonstrated that no diffusion auction mechanism can simultaneously satisfy incentive compatibility, weak budget balance, and efficiency (i.e., maximizing social welfare). Even worse, Chen et al. ~\cite {sybil2} proved that the social welfare and revenue of any (deterministic) Sybil-proof and weak budget balance mechanism could be arbitrarily low in the worst-case scenario, highlighting a critical drawback of existing approaches.
	%Most of the existing diffusion auction mechanisms give up efficiency to satisfy incentive compatibility and weak budget balance. However, their social welfare is zero in the worst-case scenario and mechanisms of approximate efficiency guarantee have been rarely studied.
	
	Multi-unit auctions naturally extend single-unit auctions. However, in multi-unit diffusion auctions, mechanism design becomes significantly more challenging, as each buyer can choose whether to invite others to control their payment and item allocation. Although the GIDM mechanism ~\cite{mul1} and the DNA-MU mechanism ~\cite{mul3} attempted to address multi-unit auctions, they were later shown to fail incentive compatibility ~\cite{mul4,mul6,mul7}.
	The LDM Tree mechanism ~\cite{mul7} guarantees IC by restricting auctions within each hierarchical level, while the PDA mechanism ~\cite{mul8} discussed fairness in multi-unit settings.
	The MUDAN mechanism ~\cite{mul6} ensures IC and weak efficiency through graph exploration, whose efficiency approximation ratio applies only to the buyers explored, rather than to the entire participants.
	
	Beyond the multi‑item setting, diffusion auction models have been extended to variants such as weighted networks \cite{cdm1,weight1,weight2,mul-weight}, auctions with budget constraints \cite{mul5}, double auctions \cite{double2}, two‑sided matching \cite{match2}, and reverse auctions \cite{reverse}. The framework has further found applications across diverse fields including cooperative game theory \cite{co}, housing markets \cite{housing1,housing2}, crowdsourcing \cite{crowdsourcing1,crowdsourcing2,crowdsourcing3}, and fog computing \cite{fog}. Moreover, formal verification of diffusion auctions has also been studied \cite{verify2,verify1}.
	
	%除了多物品模型之外，还包括带权网络、双重拍卖、双边匹配和逆向拍卖等扩散拍卖模型。。在应用方面，扩散拍卖模型还被用于合作博弈、住房市场、众包模型和雾计算等领域。此外，还有工作对扩散拍卖机制进行形式化验证。
	
	\subsection{Our contributions} 
	This paper proposes novel diffusion auction mechanisms that achieve stronger theoretical guarantees, including previously unrealized properties such as constant-bounded approximate efficiency and collusion-proofness. Since deterministic incentive compatible mechanisms cannot ensure approximate efficiency, as demonstrated in ~\cite{cdm2}, a natural idea is to introduce a stochastic mechanism. We first address the worst-case scenario of a path graph, which is critical for determining non-trivial approximate efficiency, and then extend our approach to general graphs and multi-unit auctions. Our contributions are summarized as follows.
	
	\begin{table*}[t]
		\centering
		%\begin{minipage}[t]{0.55\linewidth} %
		\caption{The properties of different diffusion auction mechanisms, where ``\ding{51}'' indicates that the row mechanism satisfies the column property, and ``\ding{55}'' indicates that the mechanism does not satisfy the property. Here, non-trivial efficiency and revenue require the expectation of social welfare and the seller's revenue to be strictly positive for any possible input. The \ding{51}\textsuperscript{$-$} indicates that DPDM satisfies non-trivial efficiency, but cannot provide any non-trivial constant lower bound. The \ding{51}\textsuperscript{$\dagger$} indicates that $f$-PDM can be modified to achieve collusion-proofness by accepting a trade-off in approximate revenue.}
		\label{tab:example}
		\begin{tabular}{l@{\quad}c@{\quad}c@{\quad}c@{\quad}c@{\quad}c@{\quad}cc}
			\toprule[1.5pt]
			& \textbf{IR} & \textbf{IC} & \textbf{WBB} & \textbf{Non-trivial efficiency} &  \textbf{SP} & \textbf{CP} & \textbf{Non-trivial revenue} \\
			\midrule[1pt]
			\quad IDM ~\cite{idm}   & \ding{51} & \ding{51} & \ding{51} & \ding{55} & \ding{55} & \ding{55} & \ding{55}\\
			\quad CDM ~\cite{cdm1}   & \ding{51} & \ding{51} & \ding{51} & \ding{55} & \ding{55} & \ding{55} & \ding{55}\\
			\quad FDM ~\cite{fdm}   & \ding{51} & \ding{51} & \ding{51} & \ding{55} & \ding{55} & \ding{55} & \ding{55}\\
			\quad NRM ~\cite{nrm}    & \ding{51} & \ding{51} & \ding{51} & \ding{55} & \ding{55} & \ding{55} & \ding{55}\\
			\quad STM ~\cite{sybil2}   & \ding{51} & \ding{51} & \ding{51} & \ding{55} & \ding{51} & \ding{55} & \ding{55}\\
			\quad SCM ~\cite{sybil2}   & \ding{51} & \ding{51} & \ding{51} & \ding{55} & \ding{51} & \ding{55} & \ding{55}\\
			\quad DPDM ~\cite{dpdm1}   & \ding{51} & \ding{51} & \ding{51} & \ding{51}\textsuperscript{$-$} & \ding{55} & \ding{55} & \ding{55}\\
			\quad GPR ~\cite{gpr}   & \ding{51} & \ding{55} & \ding{51} & \ding{51} & \ding{51} & \ding{55} & \ding{55}\\
			\quad $f$-PDM   & \ding{51} & \ding{51} & \ding{51} & \ding{51} (Thm. \ref{appeff}) & \ding{51} & \ding{51}\textsuperscript{$\dagger$} (Thm. \ref{collu}) & \ding{51} (Thm. \ref{apprev})\\
			\bottomrule[1.5pt]
		\end{tabular}
		%\end{minipage}
	\end{table*}
	
	\begin{enumerate}[label*=\arabic*)]
		%\item We introduce the Probabilistic Diffusion Mechanism (PDM), a stochastic mechanism for path graphs that guarantees individual rationality (IR), incentive compatibility (IC), weak budget balance (WBB), and Sybil-proofness (SP).
		
		\item We propose the Probabilistic Diffusion Mechanism (PDM), a diffusion auction mechanism tailored for path graphs, and extend it to arbitrary networks through a map $f$, thereby defining a family of mechanisms termed $f$-PDM. This family preserves the core properties of PDM---IR, IC, and WBB---while guaranteeing approximate efficiency with a constant approximation ratio. Moreover, with appropriately chosen $f$, $f$-PDM also achieves SP and delivers a non‑trivial approximation of seller revenue, as summarized in Table \ref{tab:example}.
		
		\item We next provide a formal definition of collusion in diffusion auctions and show that existing mechanisms are vulnerable to such behaviors. Building on this analysis, we further refine our mechanism to obtain collusion-proofness (CP) guarantees.
		
		\item Building upon PDM, we design two multi‑item diffusion auction mechanisms: MUPDM and SP‑MUPDM. The former not only satisfies IR, IC, and WBB, but also guarantees approximate efficiency; the latter sacrifices approximate efficiency in exchange for Sybil‑proofness.
		%我们在PDM的基础上设计了两个多物品扩散拍卖的机制：MUPDM和SP-MUPDM。前者不仅能够满足IR, IC, and WBB，还能够保证approximate efficiency；后者通过牺牲approximate efficiency来换取SP。
	\end{enumerate}
	
	The remainder of the paper is organized as follows. In Section \ref{sec2}, we introduce the model. Section \ref{sec4} presents our mechanism in single-unit diffusion auction and multi-unit settings are discussed in Section \ref{sec5}. We conclude the paper and discuss future research directions in Section \ref{sec6}.
	
	%\begin{enumerate}[label*=\arabic*)]
	%	\item We propose a novel stochastic mechanism named Probabilistic Diffusion Mechanism (PDM), which works when the network is a path graph. PDM ensures individually rationality (IR), incentive compatibility (IC), weak budget balance (WBB), Sybil-proof (SP), and approximate efficiency.
	
	%	\item We further generalize our mechanism to the graph by a map $f$ and design a class of mechanisms called $f$-PDM. $f$-PDM not only satisfies IC and WBB, but also ensures approximate efficiency with a constant bound. When $f$ is appropriately designed, $f$-PDM can also ensure SP and non-trivial approximate guarantee for the seller's income, which is shown in Table \ref{tab:example}.
	
	%	\item We model collusion in diffusion auctions and show the existing mechanisms cannot prevent collusion. We improve our mechanism to be collusion-proof (CP).
	%\end{enumerate}
	
	%The remainder of the paper is organized as follows. In Section \ref{sec2}, we establish the model. Our mechanism when the network is a path graph is shown in Section \ref{sec3}. In Section \ref{sec4}, we extend our mechanism to graphs. Finally, we conclude and discuss future work in Section \ref{sec5}.
	
	%------------------------ chap 2 ---------------------------
	\section{Preliminaries} \label{sec2}
	
	In a diffusion auction ~\cite{idm}, $N = \{1, 2, \ldots, n\}$ is a set of $n\ge 1$ buyers and $G=(N,E)$ is a {\em social network} upon it, where buyer $i$ knows buyer $j$ if and only if the directed edge from $i$ to $j$ is in the edge set $E\subseteq N \times N$. Further, the set of all buyers known by buyer $i$ is denoted by $r_i\subseteq N$. A seller $s$, who only knows some of the buyers $r_s \subseteq N$, is selling an item in $G$. Each buyer $i$ has a private valuation $v_i\in[0,v]$, where $v\in\mathbb{R}^+$ is a constant upper bound for the valuations of all buyers (without loss of generality, we assume that $v=1$ throughout the paper). For simplicity, let $t_i = (v_i, r_i)$ for all $i\in N$, called the {\em type} of buyer $i$.

	In the process of a diffusion auction, participation is restricted to buyers who have received an invitation.
	Initially, only buyers in $r_s$ are invited. Then each invited buyer $i$ reports a type $t'_i = (v'_i, r'_i)$, where $v'_i \in [0, 1]$ represents her bid for the item and $r'_i \subseteq r_i $ denotes the set of neighbors she wants to invite to the auction. The set of all possible types that buyer $i$ can report\footnote{If buyer $i$ is not invited, we have $t'_i = \nil$.} is denoted by $T_i$, i.e., 
	\begin{align*}
		T_i=[0,1]\times \mathcal{P}(r_i)\cup \{\nil\}, \text{ for all }i\in N,
	\end{align*}
	where $\mathcal{P}(r_i)$ denotes the power set of $r_i$. For simplicity, let $\boldsymbol{t}, \boldsymbol{t}' \in \boldsymbol{T}$ denote the vector of true types and reported types of all buyers, respectively, i.e.,
	$
	\boldsymbol{t}= (t_1, t_2, \ldots, t_n),\; \boldsymbol{t}'= (t'_1, t'_2, \ldots, t'_n),
	$
	where $\boldsymbol{T} = T_1\times T_2\times \cdots\times T_n$. Further, we use $\boldsymbol{t}_{-i}, \boldsymbol{t}'_{-i} \in \boldsymbol{T}_{-i}$ to denote the true types and the reported types of all buyers except $i$.
	
	Under the settings above, a (randomized) diffusion auction mechanism can be defined as follows, which can be regarded as an extension of the definition in ~\cite{idm}.
	
	\begin{definition}
		\label{def: rand-diffusion-auction}
		A randomized diffusion auction mechanism $M=(\boldsymbol{\pi}, \boldsymbol{p})$ consists of the following two components.
		\begin{enumerate}[label*=\arabic*)]
			\item An allocation rule $\boldsymbol{\pi}\colon \boldsymbol{T} \to [0,1]^n$, where for any input $\boldsymbol{t}\in \boldsymbol{T}$, the $i$-th element of $\boldsymbol{\pi}(\boldsymbol{t})$ represents the probability that the item is assigned to buyer $i$ (in which case we say buyer $i$ wins the item).
			\item A payment rule $\boldsymbol{p}\colon \boldsymbol{T} \rightarrow \mathbb{R}^{n\times n}$, where for any input $\boldsymbol{t}\in \boldsymbol{T}$, the $(i,j)$-th element of $\boldsymbol{p}(\boldsymbol{t})$ denotes the amount of money that buyer $i$ should pay to the seller when buyer $j$ wins the item.
		\end{enumerate}
	\end{definition}
	
	% \begin{definition}
		%     A randomized diffusion auction mechanism $M$ consists of two components:
		%     an allocation rule $\mathbf{\pi} = \pi _{i \in N}$ and a payment rule $\boldsymbol{p} = \mathbf{p_{i \in N}}$, where $\mathbf{\pi}: T \rightarrow [0, 1]^n$ and $\boldsymbol{p}: T \rightarrow (\mathbb{R}^n)^n$.
		% \end{definition}
	
	%Given all buyers' reported type $\boldsymbol{t}'$, $\pi _{i}(\boldsymbol{t}')$ is the probability that buyer $i$ wins the item and $p_i^j(\boldsymbol{t}')$ is her payment under the circumstances that buyer $j$ wins the item.
	
	% There are some basic properties that a mechanism needs to satisfy, such as a buyer cannot win the item when she is not invited. We define the mechanisms that can work normally as feasible mechanisms.
	
	For the sake of simplicity, we use $\pi_i(\boldsymbol{t})$ or $\pi_i$ to denote the $i$-th element of $\boldsymbol{\pi}(\boldsymbol{t})$, and $p_i^j(\boldsymbol{t})$ or $p_i^j$ to denote the $(i,j)$-th element of $\boldsymbol{p}(\boldsymbol{t})$.
	
	% As a matter of fact, all the diffusion auction mechanisms we mentioned are feasible, since this is the basic requirement of a diffusion auction mechanism.
	
	% Based on the notations above, the expected utility of buyer $i$ subject to mechanism $M=(\boldsymbol{\pi},\boldsymbol{p})$ and $\boldsymbol{t}' \in \boldsymbol{T}$ is defined as follows, where we assume that all buyers are risk neutral and have quasi-linear utility metrics:
	Throughout the paper, we assume that all buyers are risk neutral and have quasi-linear utility metrics, that is, given a mechanism $M=(\boldsymbol{\pi},\boldsymbol{p})$ and a reported type vector $\boldsymbol{t}' \in \boldsymbol{T}$, the expected utility of buyer $i$ is defined as follows.
	\begin{align*}
		\mathbf{E}[u_i(t'_i, \boldsymbol{t}'_{-i}, M)] = \pi _{i} \cdot v_i - \sum_{j=1}^n p_i^j \cdot \pi_j.
	\end{align*}
	We use $u_i$ to denote $u_i(t'_i, \boldsymbol{t}'_{-i}, M)$ for short. In a similar vein, the expected utility (revenue) of the seller is
	\begin{align*}
		\mathbf{E}[u_s(\boldsymbol{t}, M)] = \sum_{i=1}^n \pi_i \cdot \sum_{j=1}^{n} p_j^i = \sum_{i=1}^n \sum_{j=1}^{n} \pi_i \cdot p_j^i.
	\end{align*}
	
	Further, given a mechanism $M$ and a reported type $\boldsymbol{t}' \in \boldsymbol{T}$, the {\em social welfare} is the winner's valuation on the item, denoted by $W(\boldsymbol{t}', M)$, i.e.,$W(\boldsymbol{t}', M)= v_w,$
	where $w\in N$ is the buyer who wins the item.
	
	\paragraph{Properties of mechanisms.} In diffusion auctions, mechanisms are evaluated through a series of desirable properties. Li et al. ~\cite{idm} has provided the basic properties of deterministic mechanisms, which can be easily extended to randomized mechanisms. 
	
	\begin{itemize}
		\item {\bf Feasibility.} A mechanism $M=(\boldsymbol{\pi},\boldsymbol{p})$ is {\em feasible} if the following statements are true for any $\boldsymbol{t}' \in \boldsymbol{T}$,
		\begin{enumerate}[label=\arabic*)] 
			\item If $\boldsymbol{t}'_i = \mathbf{nil}$, then $\pi _{i}(\boldsymbol{t}') = 0$ and $\boldsymbol{p_i}(\boldsymbol{t}') = \mathbf{0}$;
			\item $\sum_{i=1}\nolimits^n \pi _{i}(\boldsymbol{t}') \le 1$. 
		\end{enumerate}
		\item {\bf Individual rationality.}	A diffusion auction mechanism $M$ is {\em individually rational (IR for short)} if for any $i\in N$, $\boldsymbol{t}'_{-i} \in \boldsymbol{T}_{-i}$, and $r'_i \subseteq r_i$, $	\mathbf{E}[u_i((v_i, r'_i), \boldsymbol{t}'_{-i}, M)] \ge 0.$
		\item {\bf Incentive compatibility.} A diffusion auction mechanism $M$ is {\em incentive compatible (IC for short)} if for any $i\in N$, $\boldsymbol{t}'_{-i} \in \boldsymbol{T}_{-i}$, and $t'_i \in \boldsymbol{T}_i$, $ \mathbf{E}[u_i(t_i, \boldsymbol{t}'_{-i}, M)] \ge \mathbf{E}[u_i(t'_i, \boldsymbol{t}'_{-i}, M)]. $
		\item {\bf Weak budget balance.} A diffusion auction mechanism $M$ is {\em weakly budget balanced (WBB for short)} if for any $\boldsymbol{t}' \in \boldsymbol{T}, \mathbf{E}[u_s(\boldsymbol{t}', M)] \ge 0.$
		\item {\bf Efficiency.}	A diffusion auction mechanism $M$ is efficient if for any $\boldsymbol{t}' \in \boldsymbol{T}, \sum_{w \in W}\pi_w(\boldsymbol{t}', M) = 1 $, where $M=(\boldsymbol{\pi},\boldsymbol{p}), W= \{ w \mid w = {\mathop{\arg\max} \nolimits_{i \in N, t_i \ne \nil} v'_i} \}$.
	\end{itemize}
	
	% \begin{itemize}
		%     \item {\bf Feasibility.} A mechanism $M=(\boldsymbol{\pi},\boldsymbol{p})$ is {\em feasible} if the following statements are true for any $\boldsymbol{t}' \in \boldsymbol{T}$,
		% 	\begin{enumerate}[label=\arabic*)] 
			% 		\item If $\boldsymbol{t}'_i = \mathbf{nil}$, then $\pi _{i}(\boldsymbol{t}') = 0$ and $\boldsymbol{p_i}(\boldsymbol{t}') = \mathbf{0}$;
			% 		\item $\sum_{i=1}\nolimits^n \pi _{i}(\boldsymbol{t}') \le 1$. 
			% 	\end{enumerate}
		
		%     \item {\bf Individual rationality.}	A diffusion auction mechanism $M$ is {\em individually rational (IR for short)} if for any $i\in N$, $\boldsymbol{t}'_{-i} \in \boldsymbol{T}_{-i}$, and $r'_i \subseteq r_i$, $	\mathbf{E}[u_i((v_i, r'_i), \boldsymbol{t}'_{-i}, M)] \ge 0.$
		
		%     \item {\bf Incentive compatibility.} A diffusion auction mechanism $M$ is {\em incentive compatible (IC for short)} if for any $i\in N$, $\boldsymbol{t}'_{-i} \in \boldsymbol{T}_{-i}$, and $t'_i \in \boldsymbol{T}_i$, $ \mathbf{E}[u_i(t_i, \boldsymbol{t}'_{-i}, M)] \ge \mathbf{E}[u_i(t'_i, \boldsymbol{t}'_{-i}, M)]. $
		
		%     \item {\bf Weak budget balance.} A diffusion auction mechanism $M$ is {\em weakly budget balanced (WBB for short)} if for any $\boldsymbol{t}' \in \boldsymbol{T}, \mathbf{E}[u_s(\boldsymbol{t}', M)] \ge 0.$
		
		%     \item {\bf Efficiency.}	A diffusion auction mechanism $M=(\boldsymbol{\pi},\boldsymbol{p})$ is efficient if for any $\boldsymbol{t}' \in \boldsymbol{T}, \pi_w(\boldsymbol{t}', M) = 1$, where $w={\mathop{\arg\max} \nolimits_{i \in N, t_i \ne \nil} v'_i}$.
		% \end{itemize}
	
	Apart from the basic properties, Sybil-proofness is a distinctive property in diffusion auctions, requiring the mechanism to be immune to Sybil attacks \cite{sybil}. Formally, a Sybil attack occurs when a buyer $i_0$ creates multiple Sybil identities $S=\{i_1, i_2, \ldots, i_k\}$, each reporting distinct types $t'_{i_1}, t'_{i_2}, \ldots, t'_{i_k}$ respectively \cite{sybil2}. To execute an effective Sybil attack, these identities must satisfy
	\begin{enumerate}[label*=\arabic*)]
		\item For every $i_j\in S$, $r'_{i_j} \subseteq r_{i_0} \cup S$ and $v_{i_j} = v_{i_0}$.
		\item For every $i_j\in S\backslash\{i_0\}$ and $i_l\in N\backslash S$, $i_j\notin r_{i_l}$. 
	\end{enumerate}
	The utility of the Sybil attacker $i_0$ can be defined as $\sum_{j=0}^k \mathbf{E}[u_{i_j}(t'_{i_0}, t'_{i_1},\ldots, t'_{i_k}, \boldsymbol{t}'_{-i_0}, M)]$ using the notation above, as all Sybil identities are under the control of $i_0$. Further, Sybil-proofness requires that no buyer can benefit from Sybil attacks. Formally, a diffusion auction mechanism $M$ is {\bf Sybil-proof} (SP for short) if for any Sybil $i=i_0$ with Sybil identities $i_1, i_2, \ldots, i_k$ and $\boldsymbol{t}'_{-i} \in \boldsymbol{T}_{-i}$,
	\begin{align*}
		\mathbf{E}[u_i(t_i, \boldsymbol{t}'_{-i}, M)] \ge \sum_{j=0}^k \mathbf{E}[u_{i_j}(t'_{i_0}, t'_{i_1},\ldots, t'_{i_k}, \boldsymbol{t}'_{-i}, M)].
	\end{align*}
	
	% \begin{definition}[\bf Sybil-proofness]
		% 	A diffusion auction mechanism $M$ is {\em Sybil-proof (SP for short)} if for any Sybil $i=i_0$ with Sybil identities $i_1, i_2, \ldots, i_k$ and $\boldsymbol{t}'_{-i} \in \boldsymbol{T}_{-i}$,
		% 	\begin{align*}
			% 		\mathbf{E}[u_i(t_i, \boldsymbol{t}'_{-i}, M)] \ge \sum_{j=0}^k \mathbf{E}[u_{i_j}(t'_{i_0}, t'_{i_1},\ldots, t'_{i_k}, \boldsymbol{t}'_{-i}, M)].
			% 	\end{align*}
		% \end{definition}
	
	Notice that SP requires buyers to be truthful, which indicates that any SP mechanism should be IC ~\cite{sybil2}.
	
	% comment begin-------------
	\iffalse
	
	Contrary to Sybil attacks, collusion refers to buyers collude with each other to gain more benefits. In other models, the potential collusive buyers sets (called cartels) are a common knowledge ~\cite{collusion1, collusion2} or only multiple buyers become one buyer is considered ~\cite{collusion3}. We relax the restrictions on cartels, only assuming that their valuations of the item are the same and the corresponding subgraph of the cartel is connected.
	
	\begin{definition}
		A diffusion auction mechanism $M$ is collusion-proof (CP) if for any cartel $C = \{ i_1, i_2, \ldots, i_k \} \subseteq N$ and $t'_{i_j} \in T_{i_j}$, 
		
		\begin{align*}
			\sum_{j=0}^k \mathbf{E}[u_{i_j}(t_{i_0}, t_{i_1},..., t_{i_k}, \boldsymbol{t}'_{-C}, M)] \ge
			\\
			\sum_{j=0}^k \mathbf{E}[u_{i_j}(t'_{i_0}, t'_{i_1},..., t'_{i_k}, \boldsymbol{t}'_{-C}, M)], 
		\end{align*}
		where $t_{i_j} = (v_{i_j}, r_{i_j}), v_{i_1} = v_{i_2} = ... = v_{i_k}$, and the subgraph $G[C]$ is connected.
	\end{definition}
	
	Note that the cartel can be arbitrary and they can misreport or even do not participate to manipulate the auction and gain more benefits, collusion-proof is hard to achieve.
	\fi
	%comment end-------------------
	
	% --------------------------- chap 4 --------------------------
	\section{Mechanism Design and Property Analysis} \label{sec4}
	In this section, we first consider a social network topology---specifically, the path graph---where existing mechanisms achieve very low social welfare. We design a diffusion auction mechanism tailored to this structure and then extend it to general graphs. We subsequently analyze the mechanism's properties, which include IR, IC, WBB, SP, collusion-proofness, approximate efficiency, and the seller's revenue.
	
	\subsection{Probabilistic Diffusion Mechanism} \label{sec3}
	When the network is a path graph, existing mechanisms frequently allocate the item to the first buyer, potentially resulting in severely suboptimal social welfare. This subsection therefore prioritizes the path graph topology, proposing a novel mechanism named probabilistic diffusion mechanism (PDM for short). Through theoretical analysis, we show that PDM is IR, IC, WBB, and SP.
	
	Without loss of generality, suppose that the path graph is from $1$ to $n$, i.e., the edge set is $E = \{(1, 2), (2, 3), ..., (n-1, n) \}$, and the seller $s$ only knows the buyer $1$. Under these settings, the PDM mechanism is defined as follows.
	
	\begin{definition} [\bf Probabilistic Diffusion Mechanism]
		
		Let $v^*_B = \max \nolimits_{i \in B} v'_i $ and $N_{-i}$ be the set of buyers who can participate in the auction when $i$ is not invited, i.e., $N_{-i} = \{1, 2, \ldots, i-1 \}$. Given $\boldsymbol{t}' \in T$, the allocation rule of probabilistic diffusion mechanism for buyer $i$ is defined as:
		\begin{align*}
			\pi_i = 
			\begin{cases} 
				\max \{0, v'_i - v^*_{N_{-i}} \},  & \text{if} \; i>1, \\
				1-v^*_N+v'_1,  & \text{if} \; i=1. \\
			\end{cases}
		\end{align*}	
		If buyer $1$ wins the item, no one needs to pay. If buyer $j>1$ wins the item, the payment rule for buyer $i$ is defined as:
		\begin{align*}
			p_i^j = 
			\begin{cases} 
				\frac{v^*_{N_{-j}} + v'_j}{2},  & \text{if} \; i=j, \\
				-\frac{v^*_{N_{-j}} + v'_j}{2},  & \text{if} \; i=1, \\
				0, & \text{otherwise.}
			\end{cases}
		\end{align*}	
	\end{definition}
	
	Intuitively, if a buyer's bid is not higher than the highest bid before her, she does not have a chance to win the item. Since buyer $1$'s reward equals to the winner's payment, the seller's revenue is $0$, which is the same as other weakly budget balanced mechanisms. However, the expected social welfare is not less than other weakly budget balanced mechanisms, since PDM may assign the item to the buyer whose bid is higher than $v'_1$.
	
	We now present a running example of PDM. The social network structure is shown in Figure \ref{eg_PDM}, with four buyers having values of $0.2$, $0.1$, $0.4$, and $1$, respectively. Under the PDM, note that buyer $b$'s bid is lower than that of buyer $a$, and thus buyer $b$ will not win the item, while buyers $a$, $c$, and $d$ win the item with probabilities of $0.2$, $0.2$, and $0.6$, respectively. When buyer a wins the item, no payment is required; when buyers $c$ or $d$ win, buyer a receives referral rewards of $0.3$ or $0.7$, respectively, for inviting others. Therefore, the expected social welfare is $\mathbf{E}[W(\boldsymbol{t}, PDM)] = 0.2 \times 0.2 + 0.2 \times 0.4 + 0.6 \times 1 = 0.72$. In comparison, other mechanisms such as IDM and CDM achieve a social welfare equal to buyer $a$'s value, i.e., $0.2$.
	%接下来我们给出一个PDM的运行实例。社会网络结构如图1所示，一共有四个买家，价值分别为0.2、0.1、0.4和1。在PDM下，注意到买家b的出价低于a的出价，因此买家b不会获得物品，买家a、c、d分别有0.2、0.2和0.6的概率赢得物品。买家a在自己获得物品时，无需支付，在买家c、d获得物品时，分别会得到0.3和0.7的邀请其他人的奖励。因此，期望社会福利$E(W)=0.2 \times 0.2 + 0.2 \times 0.4 + 0.6 \times 1 = 0.72$，相较之下，其他机制如IDM、CDM，他们的社会福利为买家a的价值，即0.2。
	\begin{figure}[t!]
		\begin{center}
			\begin{tikzpicture}[->,>=stealth,shorten >=1pt,auto, semithick]
				
				\node at (0,0) [circle, draw, minimum width=0.8cm, minimum height=0.8cm] (s) {$s$};
				\node at (2, 0) [circle, draw, label=above:$a$, minimum width=0.8cm, minimum height=0.8cm] (a)  {$0.2$};
				\node at (4, 0) [circle, draw, label=above:$b$, minimum width=0.8cm, minimum height=0.8cm] (b) {$0.1$};
				\node at (6, 0) [circle, draw, label=above:$c$, minimum width=0.8cm, minimum height=0.8cm] (c) {$0.4$};
				\node at (8, 0) [circle, draw, label=above:$d$, minimum width=0.8cm, minimum height=0.8cm] (d) {$1$};
				
				\path (s) edge node {} (a)
				(a) edge node {} (b)
				(c) edge node {} (d)
				(b) edge node {} (c);
			\end{tikzpicture}
			\caption{Example of a social network. There are four buyers in the network and buyer $d$ has the highest valuation $1$.}
			\label{eg_PDM}
		\end{center}
	\end{figure}
	
	From its definition, it is not hard to see that PDM is feasible, IR, and WBB. Moreover, the following theorem states that PDM is SP.
	%定理的意义
	\begin{theorem} \label{Thm1}
		PDM is SP.
	\end{theorem}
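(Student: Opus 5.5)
The plan is to compute each identity's expected utility under PDM in closed form as an integral over the "record increments" of the running maximum bid along the path. Sybil-proofness then follows from a pointwise comparison of integrands. Since SP implies IC, the computation also gives IC, which I would prove first as the special case $k=0$.

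\textbf{Step 1 (utility formulas).} Fix a buyer $i>1$ and write $m=v^*_{N_{-i}}$ and $b=v'_i$. Her utility is $0$ if $b\le m$, and otherwise
\[(b-m)\Bigl(v_i-\tfrac{m+b}{2}\Bigr)=\int_m^b (v_i-x)\,dx .\]
Neither $m$ nor this expression depends on $r'_i$, so only the bid matters. The integrand is positive exactly for $x<v_i$, so the integral is maximized at $b=v_i$, with value $\tfrac12\max\{0,v_i-m\}^2$. For buyer $1$ with bid $b$, let $V=v^*_N$. The buyers $j>1$ with $\pi_j>0$ are exactly the successive strict records of the running maximum, and each contributes a reward $\int_{m_j}^{b_j}x\,dx$. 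These intervals tile $[b,V]$, so
\[u_1=(1-V+b)v_1+\tfrac{V^2-b^2}{2}=v_1+\int_b^V (x-v_1)\,dx .\]
For fixed $V$ the derivative in $b$ is $v_1-b$, so truthful bidding is optimal. If $b\ge$ all other bids, then $u_1=v_1$ regardless. With $b=v_1$, $u_1$ is nondecreasing in $V$ once $V\ge v_1$. Inviting more of her neighbours can only raise $V$, because on a path the reachable set is monotone in $r'_1$. Hence truthful invitation is optimal as well. This gives IC and IR.

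\textbf{Step 2 (Sybil attack, $i_0>1$).} On a path the Sybil identities can only sit in a chain hanging off $i_0$, ahead of $i_0$'s true successor (if she invites it at all). So the attacker's identities form a contiguous block on the path. Each identity's utility is $\int_{m_j}^{b_j}(v-x)\,dx$ over its own record interval, where $v=v_{i_0}$. These intervals are pairwise disjoint and all lie above $m=v^*_{N_{-i_0}}$. Their sum is therefore at most $\int_m^{\max\{m,v\}}(v-x)\,dx$, which is the truthful utility from Step 1. Later buyers do not affect any identity in the block, so withholding invitations is irrelevant.

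\textbf{Step 3 (Sybil attack, $i_0=1$).} I expect this to be the main obstacle, because buyer $1$ is paid for records created by her own Sybils. Adding the rewards to the Sybils' utilities, each unit of record length won by a Sybil contributes $x+(v-x)=v$. The attacker's total is then $v\cdot(\text{total winning probability of the block})+\int_{R}x\,dx$, where $R\subseteq[b_1,V']$ is the union of record intervals won by genuine buyers after the block. Equivalently,
\[\text{total}=v+\int_{R}(x-v)\,dx,\]
since the block's probability is $1$ minus the length of $R$. Now $R\subseteq[\beta,V']$, where $\beta$ is the block's maximum bid and $V'$ is the final maximum. Also $V'\le V$, the truthful maximum, because Sybils cannot reach anyone the attacker could not. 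If $\beta\ge v$, the bound is immediate by comparing with $\int_v^V(x-v)\,dx$ when $V\ge v$. If $\beta<v$, the part of $R$ below $v$ has a negative integrand and only hurts the attacker, while the part above $v$ is again contained in $[v,V]$.

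Both cases give total $\le v+\int_v^{\max\{v,V\}}(x-v)\,dx$, which is the truthful utility, and this proves the theorem. The delicate points I would check carefully are that the record intervals genuinely tile the claimed ranges, and that Sybils cannot enlarge the reachable set.
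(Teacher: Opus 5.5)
Your proof is correct and follows the paper's argument. Like you, the paper assumes the attacked network is still a path. It telescopes the block's utilities into a closed form that depends only on the block's top bid $\beta$; for buyer~$1$ this is $(1-v'_m+\beta)v_1+\frac{v'^2_m-\beta^2}{2}$, which is your integral over the union of record intervals. It then maximizes by differentiating in $\beta$, which matches your sign comparison of the integrands $v-x$ and $x-v$.

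One small slip in Step~3: $V'\le V$ is false when the Sybil block outbids every genuine buyer. What you actually need is that the top endpoint of $R$ is the highest genuine bid, which is at most $V$, and that $R=\emptyset$ once $\beta$ exceeds it.
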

	
	\begin{proof}
		In this case, we suppose the network after buyer $i$ carrying out a Sybil attack is still a path graph since PDM performs on path graphs and we will prove the general case in Section \ref{sec4}.
		
		First, we prove the buyer $i \; (i>1)$ whose bid is not higher than the highest bid before her will truthfully report. If she truthfully reports, she will not win the item and her utility $u_i$ is zero. Note that $i>1$, so she cannot get any reward. The only way to have positive utility is to bid higher than $v^*_{N_{-i}}$ and luckily win the item. However, if she wins, she will pay at least $\frac{v^*_{N_{-i}} + v'_i}{2} \ge v^*_{N_{-i}} \ge v_i$, so her utility $u'_i \le 0$. Therefore, she does not have motivation to misreport. Hence, we can ignore such buyers and assume the buyers' bids are increasing.
		
		Second, we prove buyer $1$ will not increase her expected utility by Sybil attack. Suppose buyer $i_0 = 1$ creates some Sybil identities $i_1, i_2, \ldots, i_k$ who report $t'_{i_1}, t'_{i_2}, \ldots, t'_{i_k}$ respectively. If $v'_{i_{j-1}} \ge v'_{i_j}$ for $j \in \{ 1, 2, \ldots, k\} $, then $i_j$ can be ignored, so we can assume $v'_{i_{j-1}} < v'_{i_j}$ for every $j \in \{ 1, 2, \ldots, k\} $. Let $S = \{ i_0, i_1, \ldots, i_k \}$. If $v'_{i_k}$ is not the highest bid, the total expected utility of buyer $1$ and her Sybil identities is
		\begin{align*}
			\mathbf{E}[u_S]
			=&\sum_{j=0}^k \mathbf{E}[u_{i_j}(t'_{i_0}, t'_{i_1},\ldots, t'_{i_k}, \boldsymbol{t}'_{-1}, M)] \\
			= & \; \mathbf{E}[u_{S} \mid w \in S] + \mathbf{E}[u_{S} \mid w \notin S]\\
			= &\; ((1-v'_m+v'_{i_0}) + \sum_{j=1}^k (v'_{i_j}-v'_{i_{j-1}})) \cdot v_1 \; + 
			\; \sum_{j=2}^m ((v'_j-v'_{{j-1}}) \cdot \frac{v'_j+v'_{{j-1}}}{2})\\
			%= &\; (1-v'_m+v'_{i_0}) \cdot v_1 + \frac{v'^2_m - v'^2_{i_0}}{2} + \\
			%&\; (v'_{i_k}-v'_{i_{0}}) \cdot v_1 -  \frac{v'^2_k - v'^2_{i_0}}{2} \\
			= &\; (1-v'_m+v'_{i_k}) \cdot v_1 + \frac{v'^2_m - v'^2_{i_k}}{2}, 
		\end{align*}
		where $v'_1 = v'_{i_k}$, $w$ is the winner, and $v'_m$ is the highest bid. When the highest bid from the others is higher than $v_1$, we take the derivative of $\mathbf{E}[u_S]$ with respect to $v'_{i_k}$, $\mathbf{E}'[u_S] = v_1 - v'_{i_k}$. Therefore, her expected utility is maximized when she bids $v'_{i_k} = v_1$.
		$ \mathbf{E}[u_S]$ has maximum value $v_1 + \frac{(v'_m - v_{1})^2}{2}$, which equals to $\mathbf{E}[u_1(t_1, \boldsymbol{t}'_{-1}, M)]$. If she does not invite buyer $2$ or $v_{i_k}$ is the highest bid, her utility $u'_1 = v_1 \le v_1 + \frac{(v'_m - v_{1})^2}{2}$. When the highest bid from the others is not higher than $v_1$, she still maximizes her expected utility when truthfully reporting because her expected utility can never higher than $v_1$. Therefore, buyer $1$ will not increase her expected utility by using Sybil identities.
		
		Finally, we prove buyer $i > 1$ will not increase her expected utility by Sybil attack. Suppose buyer $i = i_0$ creates some Sybil identities $i_1, i_2, \ldots, i_k$ who report $t'_{i_1}, t'_{i_2}, \ldots, t'_{i_k}$ respectively. Similarly, we suppose $v'_{i_{j-1}} < v'_{i_j}$ for every $j \in \{ 1, 2, \ldots, k\} $. The maximum expected utility of buyer $i$ and her Sybil identities can be deduced by differentiation in the same way.  
		\begin{align*}
			\sum_{j=0}^k \mathbf{E}[u_{i_j}] 
			%	= &\; \sum_{j=0}^k (v'_{i_j}-v'_{i_{j-1}}) \cdot v_i -  \frac{v'^2_{i_j}-v'^2_{i_{j-1}}}{2} \\
			= &\; (v'_{i_k}-v'_{i-1}) \cdot v_i - \frac{v'^2_{i_k} - v'^2_{i-1}}{2} \\
			\le &\; \frac{(v_{i} - v'_{i-1})^2}{2} \\
			= &\; \mathbf{E}[u_i(t_i, \boldsymbol{t}'_{-i}, M)].
		\end{align*}
		%where $v'_{i_{-1}} = v'_{i-1}$. 
		Therefore, buyer $i>2$ will not increase their expected utility by using Sybil identities and thus PDM is SP. \qed
	\end{proof} 
	
	Since SP implies IC, we have the following corollary.
	
	\begin{corollary}
		PDM is IC.
	\end{corollary}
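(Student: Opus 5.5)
The corollary follows directly from Theorem~\ref{Thm1}, because a non-Sybil misreport is a special case of a Sybil attack. Fix a buyer $i$, reports $\boldsymbol{t}'_{-i}$ of the others, and any deviation $t'_i \in T_i$. We view the deviation as a Sybil attack with $k=0$ extra identities, where $i_0=i$ reports $t'_i=(v'_i,r'_i)$.

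\textbf{Step 1: check admissibility.} The conditions on Sybil identities hold trivially. First, $r'_i\subseteq r_i = r_{i_0}\cup S$ with $S=\{i\}$. Second, the condition that no outside buyer knows a fake identity is vacuous, since $S\setminus\{i_0\}=\emptyset$. Third, the requirement $v_{i_j}=v_{i_0}$ concerns only the true valuation, not the bid, so an arbitrary $v'_i$ is allowed. If the deviation is declining to participate ($t'_i=\nil$), feasibility gives utility $0$. This is dominated by truthful reporting, because PDM is IR as noted before Theorem~\ref{Thm1}.

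\textbf{Step 2: apply the theorem.} With $k=0$, the right-hand side of the SP inequality is exactly $\mathbf{E}[u_i(t'_i,\boldsymbol{t}'_{-i},M)]$. Theorem~\ref{Thm1} therefore yields
\[
\mathbf{E}[u_i(t_i,\boldsymbol{t}'_{-i},M)]\ge \mathbf{E}[u_i(t'_i,\boldsymbol{t}'_{-i},M)],
\]
which is precisely the IC condition.

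\textbf{Main subtlety.} In the proof of Theorem~\ref{Thm1}, truncating the path (inviting no one, i.e.\ $r'_i=\emptyset$) was handled explicitly for buyer $1$. For buyers $i>1$ this deviation is harmless, since their utility does not depend on buyers after them. So the reduction covers both the bid and the diffusion components of a misreport.

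As a sanity check, I would also verify IC directly from the utilities in the proof. For $i>1$, the expected utility is $(v'_i - v^*_{N_{-i}})v_i - \frac{v'^2_i - (v^*_{N_{-i}})^2}{2}$, which is maximized at $v'_i=v_i$. For buyer $1$, the expected utility is $(1-v'_m+v'_1)v_1+\frac{v'^2_m-v'^2_1}{2}$, which is maximized at $v'_1=v_1$, and inviting buyer $2$ weakly helps.
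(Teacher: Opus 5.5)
Your proposal is correct and matches the paper's argument: the paper gets the corollary in one line from Theorem~\ref{Thm1} via ``SP implies IC,'' and your step of treating a unilateral misreport as a Sybil attack with $k=0$ extra identities is exactly that reduction made explicit. The only blemish is the harmless notational slip $r_i = r_{i_0}\cup S$, which should read $r'_i\subseteq r_i\subseteq r_{i_0}\cup S$.
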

	
	This result implies that when the social network is a path graph, PDM satisfies the desired properties, which lays a foundation for subsequent proofs of IC and SP for mechanisms extended to general network structures.
	%这个结果意味着当社会网络是一个路径图时，PDM能够满足所期望的性质，为接下来扩展到一般的社会网络结构的机制IC和SP的证明提供了很大帮助。
	
	\subsection{\boldmath $f$-PDM on Graphs}
	
	The basic idea to generalize our mechanism is mapping the graph to a path graph. However, due to randomness, the range of the map should be a probability distribution of path graphs. Formally, we define $\Delta$ as the set of probability distribution over permutations of $N$. Without loss of generality, we assume $G=(N,E) \in \mathcal{G}$ is connected, i.e., every buyer can participate in the auction, where $\mathcal{G}$ is the set of network $G$.
	
	Different maps result in corresponding mechanisms having different properties. Therefore, we propose a class of diffusion mechanisms according to different maps. We define the properties of a map first.
	
	In ~\cite{idm}, for any $i, j \in N$, we say that $i$ is the \emph{diffusion critical node} of $j$ when $j$ cannot participate in the auction if $i$ is not invited. Formally,
	all the paths from the seller $s$ to $j$ have to pass $i$, denoted by $i \preccurlyeq j$. To ensure incentive compatibility, this order must be remained.
	
	Different from the definition in ~\cite{idm}, for any $i \in N$, $i$ is the diffusion critical node of its own, i.e., $i \preccurlyeq i$. Intuitively, if $i$ is $j$'s diffusion critical node in the original network, $i$ should still be $j$'s diffusion critical node after the map. We define such maps as order-preserving maps. Formally, a map $f: \mathcal{G} \rightarrow \Delta$ is an order-preserving map if for any $G=(N,E) \in \mathcal{G}$, possible outcome $o = (i_1, i_2, \ldots, i_n)$ and $j, k \in N$, if $j \preccurlyeq k$ then $i_j \le i_k$.
	
	As a matter of fact, the outcome of an order-preserving map is a topological ordering with respect to the partial order $\preccurlyeq$. However, this is not enough to ensure IC, unless every buyer achieves the best outcome when diffusing honestly. To make comparisons between the outcomes (i.e., distributions over permutations of $N$), we utilize the notion \emph{stochastic dominance} ~\cite{stochasticdominance,stochasticdominance2}. Formally, given buyer $i$ and two probability distributions over permutations of $N$, denoted as $\mu_1, \mu_2$, let $q^l_{A}= \Pr \nolimits_{\sigma \sim \mu_l} [ \{ \sigma(j) \mid j \in A \} = \{ 1, 2, \ldots, \sigma(i)-1 \}]$, where $l = 1, 2$ and $A \subseteq N \backslash \{i\}$. Then $\mu_1$ stochastically dominates $\mu_2$ for $i$ if and only if
	\begin{align*}
		\sum_{B \subseteq A} q^1_{B} \ge \sum_{B \subseteq A} q^2_{B},\quad \text{for all } A \subseteq N \backslash \{i\}.
	\end{align*}

	Intuitively, if a distribution $p$ stochastically dominates $q$ for buyer $i$, the utility of $i$ under $p$ is higher than that under $q$. Therefore, to guarantee incentive compatibility, the map should achieve the property that if for every buyer $i$, the distribution stochastically dominates other distributions when she diffuses honestly. We formalize such maps as follows.
	
	\begin{definition}
		A map $f: \mathcal{G} \rightarrow \Delta$ is an incentive diffusion map if $f$ is an order-preserving map, and for any $G=(N,E) \in \mathcal{G}$, $i \in N$ and any subgraph $G'$ of $G$, $f(G)$ stochastically dominates $f(G')$ for $i$, where $G'$ is the remaining network whose vertices can participate in the auction when some of $i$'s outgoing edges are removed. 
	\end{definition}
	
	The property of incentive diffusion maps ensures that buyers will not achieve better outcome when they do not report their neighbors honestly. Based on the incentive diffusion maps, we show our generalized mechanism called $f$-PDM.
	
	\begin{definition} [$f$-PDM]
		Given an incentive diffusion map $f: \mathcal{G} \rightarrow \Delta$, $f$-probabilistic diffusion mechanism ($f$-PDM for short) works as follows.
		
		\begin{enumerate}[label*=\arabic*)]
			\item Map the graph $G$ to the path graph drawn from $f(G)$.
			\item Perform PDM on the path graph.
			\item Let $i$ be the first buyer in the path graph. She pays extra $\frac{1}{2}(v^*_{N_{-i}})^2$, where $v^*_B = \max \nolimits_{j \in B} v'_j $ and $N_{-i} = N \, \backslash \, \{ \, j \, | \, i \preccurlyeq j \, \}$ is the set of buyers who can participate in the auction when $i$ is not invited.
		\end{enumerate}
		
	\end{definition}

	\begin{figure}[t!]
		\begin{center}
			\begin{tikzpicture}[->,>=stealth,shorten >=1pt,auto, semithick]
				
				\node at (0,0) [circle, draw, minimum width=0.8cm, minimum height=0.8cm] (s) {$s$};
				\node at (2, 0.8) [circle, draw, label=above:$a$, minimum width=0.8cm, minimum height=0.8cm] (a)  {$0.3$};
				\node at (2, -0.8) [circle, draw, label=below:$b$, minimum width=0.8cm, minimum height=0.8cm] (b) {$0$};
				\node at (4, 0.8) [circle, draw, label=above:$c$, minimum width=0.8cm, minimum height=0.8cm] (c) {$0.9$};
				
				\path (s) edge node {} (a)
				(s) edge node {} (b)
				(a) edge node {} (b)
				(a) edge node {} (c)
				(b) edge node {} (a);
			\end{tikzpicture}
			\caption{Example of a social network. There are three buyers in the network and buyer $c$ has the highest valuation $0.9$.}
			\label{fig1}
		\end{center}
	\end{figure}
	
	\begin{figure}[t!]
		\begin{center}
			\begin{tikzpicture}[->,>=stealth,shorten >=1pt,auto, semithick]
				\node at (-1.5,3) (n2) { Case 1: };
				\node at (0,3) [circle, draw, minimum width=0.8cm, minimum height=0.8cm] (s2) {$s$};
				\node at (1.5, 3) [circle, draw, label=above:$a$, minimum width=0.8cm, minimum height=0.8cm] (a2)  {$0.3$};
				\node at (3, 3) [circle, draw, label=above:$b$, minimum width=0.8cm, minimum height=0.8cm] (b2) {$0$};
				\node at (4.5, 3) [circle, draw, label=above:$c$, minimum width=0.8cm, minimum height=0.8cm] (c2)  {$0.9$};
				\node at (-1.5,1.5) (n1) { Case 2: };
				\node at (0,1.5) [circle, draw, minimum width=0.8cm, minimum height=0.8cm] (s1) {$s$};
				\node at (1.5, 1.5) [circle, draw, label=above:$a$, minimum width=0.8cm, minimum height=0.8cm] (a1)  {$0.3$};
				\node at (3, 1.5) [circle, draw, label=above:$c$, minimum width=0.8cm, minimum height=0.8cm] (c1) {$0.9$};
				\node at (4.5, 1.5) [circle, draw, label=above:$b$, minimum width=0.8cm, minimum height=0.8cm] (b1)  {$0$};
				\node at (-1.5,0) (n2) { Case 3: };
				\node at (0,0) [circle, draw, minimum width=0.8cm, minimum height=0.8cm] (s3) {$s$};
				\node at (1.5, 0) [circle, draw, label=above:$b$, minimum width=0.8cm, minimum height=0.8cm] (b3)  {$0$};
				\node at (3, 0) [circle, draw, label=above:$a$, minimum width=0.8cm, minimum height=0.8cm] (a3) {$0.3$};
				\node at (4.5, 0) [circle, draw, label=above:$c$, minimum width=0.8cm, minimum height=0.8cm] (c3)  {$0.9$};
				
				\path (s1) edge node {} (a1)
				(a1) edge node {} (c1)
				(c1) edge node {} (b1)
				(s2) edge node {} (a2)
				(a2) edge node {} (b2)
				(b2) edge node {} (c2)
				(s3) edge node {} (b3)
				(b3) edge node {} (a3)
				(a3) edge node {} (c3);
			\end{tikzpicture}
			\caption{Example of $f$-PDM. There are 3 cases that $f$ might map to under the network in Figure \ref{fig1}. Then PDM is performed.}
			\label{fig2}
		\end{center}
	\end{figure}
	
	Intuitively, $f$-PDM maps the graph to a path graph first, and then performs PDM on the path graph. Moreover, to avoid zero revenue of the seller, the first buyer is additionally charged. We give a running example of $f$-PDM for a better understanding. Figure \ref{fig1} shows an example of a network. First, $f$ maps the network to one of the path graphs in Figure \ref{fig2}. The constraint that buyer $a$ must precede $c$, imposed by $f$ being an incentive diffusion map, leaves only the three configurations in Figure \ref{fig2} as feasible.
	
	Different maps correspond to different probability distributions of the three cases. We now present two mapping examples---the breadth-first map and the generalized breadth-first map---defined by Mechanisms \ref{bfmap} and \ref{f_g} respectively. The properties of these two maps will be analyzed in Section \ref{SecSP}.
	
	\begin{algorithm}[h]
		\caption{The breadth-first map}
		\label{bfmap}
		\KwIn {A graph $G=(N, E)$, neighbors of the seller $r_s$}
		\KwOut{A permutation of $N$}
		$L_i \gets \{ v \mid \text{the distance from $s$ to $v$ is } i. \}$
		
		\For{$i=1$ to $|N|$} {
			\While{$L_i \neq \emptyset$}{
				Select an element $j$ in $L_i$ with equal probability \; 
				Output $j$ \tcp*{$j$ is the selected buyer.}
				$L_i \gets L_i - \{ j \} $\;
			}
		}
	\end{algorithm}
	
	\begin{algorithm}[h]
		\caption{The generalized breadth-first map}
		\label{f_g}
		\KwIn {A graph $G=(N, E)$}
		\KwOut{A permutation of $N$}
		$L \gets r_s $ \tcp*{$L$ is the candidate set, $r_s$ is the set of neighbors of the seller.}
		$A \gets \emptyset $ \tcp*{$A$ is the set of selected buyers.}
		\While{$L \neq \emptyset$} {
			Select an element $j$ in $L$ with equal probability \; 
			Output $j$ \tcp*{$j$ is the selected buyer.}
			$A \gets A \cup \{ j \} $ \;
			$L \gets (L - \{ j \}) \cup (r_j - A) $\;
		}
	\end{algorithm}
	
	%接下来我们给出两个映射实例—— breadth-first map和generalized breadth-first map，分别由机制1和机制2定义。
	
	For the breadth-first map, it selects buyers sequentially based on their distance from the seller. Consequently, buyers $a$ and $b$ each have a $0.5$ probability of being selected first, with the remaining buyer becoming second, corresponding to case $1$ and case $3$ in Figure \ref{fig2} respectively.
	%它按照距离卖家的顺序选择买家，因此，买家a和买家b各有0.5的概率成为第一个买家，另外一个成为第二个买家，分别对应图1的case 1和case 3。
	
	In contrast, the generalized breadth-first map uniformly selects the current buyer $j$ at random from the candidate set, after which it adds $j$'s neighbors to the candidate set. Similarly, buyers $a$ and $b$ each have a probability of $0.5$ of being selected first. However, once buyer $a$ becomes the first buyer, buyer $c$ is added to the candidate set and then competes with buyer $b$ with equal probability for the second selection. Therefore, the probabilities of Cases $1$, $2$, and $3$ occurring in Figure \ref{fig2} are $0.25$, $0.25$, and $0.5$ respectively.
	%相较而言，generalized breadth-first map每次从候选人的集合中等概率地选取当前买家j，之后将j的邻居加入到候选人集合中。同样，买家a和买家b各有0.5的概率成为第一个买家，然而一旦买家a首先被选中，买家c加入到候选人集合中，（买家c）与b等概率地成为第二个买家。因此图二中的case1、2、3发生的概率分别为0.25、0.25、0.5。
	
	In case $1$, $a$ becomes the first buyer. Since $v_b$ is not higher than $v_a$, $b$ has no chance to win, while the winning probabilities for $a$ and $c$ are $0.4$ and $0.6$ respectively. The expected social welfare is $(0.3+0.1) \times 0.3 + (0.9-0.3) \times 0.9 = 0.66$ and the seller's revenue is $0$. In case $2$, the expected social welfare is $0.66$ and the seller's revenue is $0$. In case $3$, the expected social welfare is $0.63$ and the seller's revenue is $0.405$. 
	By comparison, the expected social welfare of IDM is $0.3$ and the seller's revenue of IDM is $0$.
	
	At the end of this subsection, we analyze the time complexity of f-PDM. Since $f$-PDM represents a class of mechanisms, its specific time complexity depends on the map $f$. In particular, the mechanisms corresponding to the breadth-first map and the generalized breadth-first map have a time complexity equivalent to that of breadth-first graph traversal, which is $O(|E \cup r_s|)$. (This holds because we assume the graph is connected, so $|E| \cup r_s \ge n$.)
	%在本小节的最后，我们阐述f-PDM的时间复杂度。由于f-PDM是一类机制，因此具体的时间复杂度取决于映射f。特别地，The breadth-first map和The generalized breadth-first map对应的机制的时间复杂度相当于广度优先遍历图的复杂度，都是O(|E|) (因为我们假设图是连通的，因此|E|>=n-1，并且r_s<n)。

	\subsection{Individual Rationality and Incentive Compatibility}
	%\subsection{IR, IC and Sybil-proofness}
	
	It is not hard to see that $f$-PDM is WBB, we prove $f$-PDM is IR first. %由于IR是IC的基本推论，因此我们先将$f$-PDM是IR的作为引理证明，来为IC的证明打好基础。
	
	\begin{lemma} \label{IR}
		For any incentive diffusion map $f$, $f$-PDM is IR.
	\end{lemma}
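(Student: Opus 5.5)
The plan is to prove the inequality pointwise: I show that buyer $i$'s expected utility is non-negative for every path graph in the support of $f(G')$, where $G'$ is the network induced by $i$'s report $(v_i,r'_i)$ and the other reports. Averaging over the distribution then gives $\mathbf{E}[u_i]\ge 0$. So fix a realized permutation and run PDM on it. Throughout, $i$ bids truthfully $v'_i=v_i$, and every $v'_j\in[0,1]$. I split into two cases according to whether $i$ is the first buyer of the path.

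\emph{Case 1: $i$ is not the first buyer.} Then $i$ never receives a reward, and she pays only when she wins. Let $a$ be the highest bid preceding $i$ on the path. If $v_i\le a$, then $\pi_i=0$ and $u_i=0$. Otherwise $\pi_i=v_i-a$ and she pays $\frac{a+v_i}{2}$, so her expected utility is
\begin{align*}
(v_i-a)\Big(v_i-\frac{a+v_i}{2}\Big)=\frac{(v_i-a)^2}{2}\ge 0 .
\end{align*}

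\emph{Case 2: $i$ is the first buyer.} Let $m$ be the highest bid on the path. This includes $v_i$, so $m\ge v_i$. The buyers $j$ with $\pi_j>0$ are exactly the successive ``record'' bidders. Buyer $i$ receives $\frac{a_j+v'_j}{2}$ with probability $v'_j-a_j$ from each of them, where $a_j$ is the previous record. These terms telescope, exactly as in the computation in the proof of Theorem~\ref{Thm1}, to $\frac{m^2-v_i^2}{2}$. Adding her own winning term and subtracting the extra charge gives
\begin{align*}
\mathbf{E}[u_i]=(1-m+v_i)v_i+\frac{m^2-v_i^2}{2}-\frac{c^2}{2}=v_i+\frac{(m-v_i)^2}{2}-\frac{c^2}{2},
\end{align*}
where $c=v^*_{N_{-i}}$.

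The key observation is that $c\le m$. The set $N_{-i}$ consists of buyers whose participation does not depend on $i$, and these buyers are present in the realized path regardless of $r'_i$. Hence $c\le\max(m,0)=m$ (taking $c=0$ if $N_{-i}=\emptyset$). This is the step I expect to need the most care: one must check that $N_{-i}$ is evaluated on the reported network and is contained in the participant set.

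Given $c\le m$, the final inequality splits into two subcases. If $c\le v_i$, then $2v_i-c^2\ge 2v_i-v_i^2\ge 0$ because $v_i\le 1$. If $v_i<c\le m$, then $(m-v_i)^2\ge(c-v_i)^2$, so
\begin{align*}
2\,\mathbf{E}[u_i]\ge 2v_i+c^2-2cv_i+v_i^2-c^2=2v_i(1-c)+v_i^2\ge 0,
\end{align*}
using $c\le 1$. In both subcases the utility is non-negative for every realized path, which proves the lemma.
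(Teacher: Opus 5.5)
Your proof is correct and follows essentially the same route as the paper: non-first buyers are handled by PDM's own IR, and for the first buyer you compute her PDM utility $v_i+\frac{1}{2}(m-v_i)^2$ and then use $v^*_{N_{-i}}\le v^*_N=m$. The only difference is cosmetic: the paper avoids your subcase split by substituting $c\le m$ directly, which gives $v_i(1-m)+\frac{1}{2}v_i^2\ge 0$ in one line.
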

	
	\begin{proof}
		Note that PDM is individually rational from Section \ref{sec3}, we only need to prove the first buyer after the map has non-negative utility. Suppose $i$ is the first buyer after the map, then her expected utility under PDM is $v_i+\frac{1}{2}(v^*_N-v_i)^2$. Therefore, her expected utility under $f$-PDM is
		\begin{align*}
			\mathbf{E}[u_i] 
			= &\; v_i+\frac{1}{2}(v^*_N-v_i)^2-\frac{1}{2}(v^*_{N_{-i}})^2
			\ge \; v_i+\frac{1}{2}(v^2_i-2 v_i  v^*_N) \\
			= &\; v_i (1-v^*_N) +\frac{1}{2} v^2_i \;
			\ge \; \frac{1}{2} v^2_i \;
			\ge \; 0. 
		\end{align*}
		Therefore, $f$-PDM is IR. \qed
	\end{proof}
	
	Furthermore, we prove $f$-PDM is IC.
	
	\begin{theorem}
		For any incentive diffusion map $f$, $f$-PDM is IC.
	\end{theorem}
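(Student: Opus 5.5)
We prove IC by fixing a buyer $i$ and the reports $\boldsymbol{t}'_{-i}$ of the others, and writing $i$'s expected utility under $f$-PDM as an average over the random path drawn from $f(G)$. We separate the two dimensions of misreporting: the bid $v'_i$ and the invited set $r'_i$. The plan is first to show that, for any fixed path (equivalently, any fixed set $A$ of buyers placed before $i$), truthful bidding maximizes $i$'s conditional utility. Then we show that the conditional utility under truthful bidding is monotone in $A$, so stochastic dominance turns it into an inequality between the distributions $f(G)$ and $f(G')$.

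\textbf{Step 1 (bids, fixed path).} Condition on a realized path. If $i$ is not first, only the buyers before her matter for her allocation and payment; buyers after her affect neither. Her conditional utility equals that in PDM, namely $(v'_i-v^*_A)^+v_i-(v'^2_i-(v^*_A)^2)^+/2$ with $A$ the predecessors. The Corollary (PDM is IC), or directly the derivative computation in the proof of Theorem~\ref{Thm1}, shows this is maximized at $v'_i=v_i$, with value $\frac12((v_i-v^*_A)^+)^2$. If $i$ is first, her utility under PDM is $(1-v^*_N+v'_i)v_i+\frac12(v^{*2}_{N}-v'^2_i)$ when $v'_i<v^*_{N\setminus\{i\}}$ (and $v_i$ otherwise). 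Again this is maximized at $v'_i=v_i$ by the same derivative argument. The extra charge $\frac12(v^*_{N_{-i}})^2$ depends only on buyers not in $i$'s critical subtree, hence not on $i$'s report at all; here we use the order-preserving property, since $N_{-i}$ is determined by reports of buyers outside $\{j: i\preccurlyeq j\}$. So bidding truthfully is optimal pathwise, and it remains to handle the diffusion report $r'_i$ with $v'_i=v_i$.

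\textbf{Step 2 (diffusion).} Misreporting $r'_i\subsetneq r_i$ changes the graph to $G'$, and hence the distribution to $f(G')$. When $i$ is not first, her truthful conditional utility $g(A)=\frac12((v_i-v^*_A)^+)^2$ depends only on the predecessor set $A$ and is non-increasing in $A$ under inclusion. Writing $\mathbf{E}[u_i]=\sum_A q_A g(A)$ and rewriting $g$ as a nonnegative combination of indicator-type functions $\mathbf 1[A\subseteq C]$ (e.g.\ via a layer-cake decomposition over thresholds of $v^*_A$), we have $\sum_A q_A\mathbf 1[A\subseteq C]=\sum_{B\subseteq C}q_B$. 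The stochastic dominance condition in the definition of incentive diffusion map then yields $\mathbf{E}_{f(G)}[u_i]\ge \mathbf{E}_{f(G')}[u_i]$. The case where $i$ is first corresponds to $A=\emptyset$. We must check that $i$'s utility as first buyer, $v_i+\frac12((v^*_N-v_i)^+)^2-\frac12(v^*_{N_{-i}})^2$, is at least her utility in any non-first position, and at least what she would get as first buyer in $G'$. In $G'$, $v^*_N$ can only drop, and her utility there is increasing in $v^*_N$ for $v^*_N\ge v_i$. The charge is unchanged. For the comparison with non-first positions, the bound from Lemma~\ref{IR}'s calculation gives utility $\ge \frac12 v_i^2 \ge g(A)$ for every $A$, so extending $g(\emptyset)$ to this value preserves monotonicity.

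\textbf{Main obstacle.} The hard step is Step 2: the first-position utility depends on $v^*_N$, which $r'_i$ also changes, so it is not purely a function of $A$. We must also make the layer-cake reduction match exactly the form $\sum_{B\subseteq A}q_B$ of stochastic dominance. We would handle it by splitting $\mathbf{E}[u_i]$ into the term where $i$ is first, bounded termwise as above, and the remaining terms, which are functions of $A$ only. We then apply dominance to the monotone function obtained by assigning to $\emptyset$ the truthful first-position value in $G$. This gives an upper bound on the deviating utility, since first-position utility in $G'$ is no larger.
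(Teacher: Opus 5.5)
Your proposal is correct and takes essentially the same route as the paper. You reduce bid deviations to PDM's IC, since $f(G)$ does not depend on bids. You then use that $i$'s truthful utility is non-increasing in her predecessor set, with the first-position value sitting on top via the bound $\frac12 v_i^2$ from Lemma~\ref{IR}, and turn stochastic dominance into a utility inequality; your layer-cake decomposition over the sets $C_t$ is just the continuous form of the paper's Abel summation over the prefix sets $B_j=\{i_1,\ldots,i_j\}$ of lower bidders. You also state explicitly that the first-position utility under $G'$ can only be smaller ($v^*_N$ can only drop and the charge is unchanged), whereas the paper uses the same $u^0_i$ for both $f(G)$ and $f(G')$, so its final step $\sum_l q^2_l u^l_i = u'_i$ should really read $\ge u'_i$.
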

	
	\begin{proof}
		Since PDM is IC and $f(G)$ is independent to the bids, we only need to show every buyer will report her neighbors truthfully. Note that $f$ is an incentive diffusion map, so for every buyer $i$, $f(G)$ stochastically dominates $f(G')$, where $G'$ is the network when $i$ misreports. Let $u_i$ and $u'_i$ be the expected utility under $G$ and $G'$ respectively. We prove $u_i \ge u'_i$ for any $\boldsymbol{t}'_{-i} \in \boldsymbol{T}_{-i}$.
		
		Suppose $i_1, i_2, \ldots, i_k$ are buyers whose bids are lower than $v'_i$ and $v'_{i_1} \le v'_{i_2} \le \cdots \le v'_{i_k} \le v'_i$. Let $q^l_{A}= \Pr \nolimits_{\sigma \sim \mu_l} [ \{ \sigma(j) \mid j \in A \} = \{ 1, 2, \ldots, \sigma(i)-1 \}]$, where $l = 1, 2$, $\mu_1 = f(G)$, $\mu_2 = f(G')$, and $A \subseteq N \backslash \{i\}$. Moreover, let $\mathcal{A}_j = \{ A \subseteq N \backslash \{i\} \mid i_j $ is the largest buyer among the highest bidders in $ A \} $ ($\mathcal{A}_0 = \{ \emptyset \}$),  and $q^l_j = \sum_{A \in \mathcal{A}_j} q^l_{A}$. Let $u^j_{i}$ be the expected utility of $i$ under PDM when $\max \nolimits_{\sigma(m) < \sigma(i)} v_m =   v_{i_j}$ ($u_i^0$ is the expected utility of $i$ under PDM when $i$ is the first buyer after the map), then from the proof of Lemma \ref{IR}, $u^0_i \ge \frac{1}{2} v^2_i \ge \frac{1}{2} ( v_i - v'_{i_1})^2 = u^1_i$. Therefore, $u^0_{i} \ge u^1_i \ge \cdots \ge u^k_i \ge 0$.
		
		Since $\mu_1 = f(G)$ stochastically dominates $\mu_2 = f(G')$, for every $B_j = \{ i_1, i_2, \ldots, i_j \} \subseteq N \backslash \{i\} (B_0 = \emptyset)$, $\sum_{l=0}^{j}q^1_l = \sum_{B \subseteq B_j} q^1_{B} \ge \sum_{B \subseteq B_j} q^2_{B}=\sum_{l=0}^{j}q^2_l$. Therefore, $u_i = \sum_{l=0}^k q^1_l \cdot u^l_i \ge \sum_{l=0}^k q^2_l \cdot u^l_i = u'_i$ by iteration, and thus her expected utility is maximized when she reports her neighbors truthfully.
		
		Therefore, $f$-PDM is IC. \qed
	\end{proof}
	
	Moreover, $f$-PDM can also satisfy ex-post properties. Ex-post properties ensure that after the result is revealed, everyone will participate in the auction and report the same type if they start over. $f$-PDM is ex-post WBB, because the seller's revenue is always non-negative. For ex-post IC, even if all types are revealed, every buyer will maximize her utility by reporting truthfully when starting over, because she is risk neutral.
	
	For ex-post IR, it can be satisfied with a minor adjustment, and it still ensures other properties such as IC and approximate efficiency. Note that only the first buyer may violate ex-post IR because she has to pay extra $(v^*_{N_{-i}})^2/2$. We can allocate her payment to each situation and she will pay $(v^*_{N_{-i}})^2/2$ in expectation. For example, letting $k-1 < N_{-i} \le k$, we can change the payment rule to $p^1_1 = v^2_1/(2 \pi_1), p_1^i=0 \, (2 \le i \le k-1), p_1^k=\frac{(v^*_{N_{-i}})^2-v_k^2}{2\pi_k}$, the first buyer does not need to pay extra, and the other rules remain unchanged.
	
	\subsection{Sybil-proofness} \label{SecSP}
	
	%女巫攻击对于扩散拍卖有着很大的威胁，因为买家可以通过创造多个女巫身份来增加获胜的可能或得到的奖励，进而增加自己的效用，这损害了其他买家和卖家的利益。然而，目前除了少数机制之外，很少有机制能够保证Sybil-proof。例如，对于图3，如果买家a诚实报告，那么在IDM机制下他的效用为0。如果她进行女巫攻击，创造一个女巫身份a'，a'报价0.9，那么她将获得0.8的奖励，而买家c的支付从0.1变成了0.9。
	Sybil attacks pose a significant threat to diffusion auctions, as buyers can increase their winning probability or receive more rewards by creating multiple Sybil identities, thereby increasing their utility, which harm the interests of both other buyers and the seller. However, few existing mechanisms guarantee Sybil-proofness, with only a handful of exceptions. For example, in Figure \ref{figsp}, if buyer $a$ reports truthfully, their utility under the IDM mechanism is $0$. If she launches a Sybil attack by creating a Sybil identity $a'$ with a reported valuation of $0.9$, she would receive a reward of $0.8$, while buyer $c$'s payment increases from 0.1 to 0.9.

	\begin{figure}[t!]
		\begin{center}
			\begin{tikzpicture}[->,>=stealth,shorten >=1pt,auto, semithick]
				
				\node at (-6,0) [circle, draw, minimum width=0.8cm, minimum height=0.8cm] (s) {$s$};
				\node at (-4, 0.8) [circle, draw, label=above:$a$, minimum width=0.8cm, minimum height=0.8cm] (a)  {$0$};
				\node at (-4, -0.8) [circle, draw, label=below:$b$, minimum width=0.8cm, minimum height=0.8cm] (b) {$0.1$};
				\node at (-2, 0.8) [circle, draw, label=above:$c$, minimum width=0.8cm, minimum height=0.8cm] (c) {$1$};
				\node at (2,0) [circle, draw, minimum width=0.8cm, minimum height=0.8cm] (s1) {$s$};
				\node at (4, 0.8) [circle, draw, label=above:$a$, minimum width=0.8cm, minimum height=0.8cm] (a1)  {$0$};
				\node at (4, -0.8) [circle, draw, label=below:$b$, minimum width=0.8cm, minimum height=0.8cm] (b1) {$0.1$};
				\node at (6, 0.8) [circle, draw, label=above:$c$, minimum width=0.8cm, minimum height=0.8cm] (c1) {$1$};
				\node at (6, -0.8) [circle, draw, label=below:$a'$, minimum width=0.8cm, minimum height=0.8cm] (d1) {$0.9$};
				\node at (-4,-2) (n2) { (a) };
				\node at (4,-2) (n2) { (b) };
				
				\path (s) edge node {} (a)
				(s) edge node {} (b)
				(a) edge node {} (c)
				(s1) edge node {} (a1)
				(s1) edge node {} (b1)
				(a1) edge node {} (c1)
				(a1) edge node {} (d1);
			\end{tikzpicture}
			\caption{Example of a Sybil attack. Buyer $a$ is incentivized to create a Sybil identity $a'$, and get a reward of $0.8$ instead of $0$.}
			\label{figsp}
		\end{center}
	\end{figure}

	Similar to IDM, $f$-PDM is not SP in general, since creating Sybil-identities may cause a distribution that stochastically dominates the original one. Therefore, we need to improve the map and define maps that are Sybil-proof.
	
	\begin{definition}
		A map $f: \mathcal{G} \rightarrow \Delta$ is a Sybil-proof map if $f$ is an incentive diffusion map, and for any $G=(N,E) \in \mathcal{G}$, $i \in N$ and any subgraph $G'$ of $G$, $f(G')$ stochastically dominates $f(G)$, where $G'$ is the remaining network when some of the nodes $i_1, \ldots , i_k$ are removed and for every $1 \le j \le k$, $i$ is $i_j$'s diffusion critical node.
	\end{definition}
	
	Intuitively, Sybil-proof maps indicate that Sybil identities of $i$ do not affect her outcome. We formalize the property as follows.
	
	\begin{theorem} \label{SPmap}
		When $f$ is a Sybil-proof map, $f$-PDM is SP.
	\end{theorem}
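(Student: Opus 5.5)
The plan is to compare the Sybil attacker's total expected utility with the truthful utility by conditioning on the random path drawn from the map. Fix a buyer $i=i_0$ with Sybil identities $S=\{i_1,\ldots,i_k\}$ and reports $\boldsymbol{t}'_{-i}$ of the others. Let $G_S$ be the network after the attack and $G$ the truthful network.

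\textbf{Step 1: reduce to the graph with the Sybils removed.} Every Sybil identity can only be reached through $i$, since no honest buyer has an edge to it. Hence $i \preccurlyeq i_j$ in $G_S$ for every $j$. Removing the Sybils from $G_S$ gives a graph $G''$, which is $G$ with possibly some of $i$'s outgoing edges deleted. This is because the Sybils' invitations are all contained in $r_{i_0}\cup S$. By the Sybil-proof map property, $f(G'')$ stochastically dominates $f(G_S)$ for $i$. By the incentive diffusion property, $f(G)$ stochastically dominates $f(G'')$. Stochastic dominance is transitive, since it is defined by a family of linear inequalities, so $f(G)$ dominates $f(G_S)$ for $i$.

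\textbf{Step 2: handle a fixed predecessor set.} Fix an ordering drawn from $f(G_S)$, and let $B$ be the set of honest buyers placed before the first identity in $S$. By order preservation, $i$ precedes all its Sybils, so that first identity is $i$ itself. The Sybils are interleaved with honest buyers on the path, so I must extend the path-graph argument of Theorem~\ref{Thm1}, not only the contiguous case treated there. I would again discard any buyer whose bid does not exceed the running maximum. Then $S$'s combined allocation is a sum of increments $v'_{i_j}-(\text{previous max})$, each paying the midpoint of its interval. The honest buyers wedged between Sybils take intervals away from $S$, and an interval lost this way would have contributed $(b-a)(v_i-\frac{a+b}{2})$. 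So the identity-by-identity computation in Theorem~\ref{Thm1} gives an upper bound of $\frac{1}{2}(v_i-v^*_B)_+^2$ when $B\neq\emptyset$.

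When $i$ is first on the path, I bound the total by $v_i+\frac12(v^*_{N}-v_i)^2-\frac12(v^*_{N_{-i}})^2$. Here $N_{-i}$ does not change under Sybils, because the Sybils are all $i$-dominated. The referral rewards $i$ receives from honest winners are the same midpoint formulas as in Theorem~\ref{Thm1}'s case for buyer $1$, and the Sybils' own winning intervals are weakly worse than $i$ winning them at price zero. This per-realization bound is exactly the truthful conditional utility $u_i^{l}$ from the IC proof.

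\textbf{Step 3: combine.} Group orderings by which $i_l$ is the highest bidder before $i$, as in the IC proof, with weights $q_l$. The bounds $u_i^0\ge u_i^1\ge\cdots\ge 0$ are monotone, so the dominance from Step 1 yields $\sum_l q^{S}_l u_i^l \le \sum_l q^{G}_l u_i^l$. The right-hand side is the truthful expected utility, which proves the theorem.

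The main obstacle is Step 2 with interleaving. There I must show carefully that splitting $i$'s interval into pieces separated by honest buyers never helps, including when a Sybil bids above $v_i$. I would handle this by treating each maximal run of Sybils separately: reporting $v_i$ is optimal for each piece, and intervals above $v_i$ contribute non-positively.
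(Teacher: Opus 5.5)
Your proposal is correct and follows the same route as the paper's proof. The Sybil-proof map property, which you rightly chain with the incentive-diffusion property to cover withheld invitations, shows that $i$'s predecessor-set distribution cannot improve. Order preservation then places every Sybil after $i$, so per realization the identities are useless, as in Theorem~\ref{Thm1}.

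Your Steps 2 and 3 fill in details the paper's two-line argument leaves implicit. For the interleaved case, the cleanest finish is to note that the disjoint intervals $S$ wins contribute at most $\int_{v^*_B}^{1}(v_i-x)_+\,dx$, with a pointwise comparison when $i$ is first. The combination step is then the IC proof's cumulative-sum argument over the monotone $u_i^l$.
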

	
	\begin{proof}
		Given buyer $i$, since $f$ is a Sybil-proof map, creating Sybil identities does not affect the buyer's outcome. Moreover, $f$ is order-preserving, so $i$ is still smaller than her Sybil identities after the map, which makes Sybil identities useless. Therefore, $f$-PDM is SP. \qed
	\end{proof}
	
	At the end of this subsection, we demonstrate that both the breadth-first map and the generalized breadth-first map are Sybil-proof maps, and thus the corresponding mechanisms are SP. Consequently, they are also IC, since SP implies IC.
	
	\begin{proposition}
		The Sybil-proofness of $f$-PDM are as follows.
		
		\begin{enumerate}[label*=\arabic*)]
			\item For the generalized breadth-first map $f_g$, $f_g$-PDM is SP.
			\item When $f$ is the breadth-first map, $f$-PDM is SP.
		\end{enumerate}
		
	\end{proposition}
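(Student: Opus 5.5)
By Theorem \ref{SPmap}, it suffices to show that each of the two maps is a Sybil-proof map. That means checking three things for each map: (a) it is order-preserving; (b) it is an incentive diffusion map, i.e., removing some of $i$'s outgoing edges yields a distribution stochastically dominated, for $i$, by the original one; (c) deleting nodes $i_1,\dots,i_k$ for which $i$ is a diffusion critical node yields a distribution that stochastically dominates the original one for $i$. The common tool will be a coupling argument. Stochastic dominance for $i$ holds whenever we can couple $\sigma\sim f(G)$ with $\sigma'\sim f(G')$ so that the set of buyers placed before $i$ under $\sigma$ is always a subset of the set placed before $i$ under $\sigma'$. Summing $q_B$ over $B\subseteq A$ is then monotone, as required.

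\emph{Order preservation.} For the breadth-first map, if $j\preccurlyeq k$ with $j\neq k$, every shortest path from $s$ to $k$ passes through $j$, so $d(j)<d(k)$ and $j$ is output first. For $f_g$, a buyer only enters the candidate set $L$ after one of its in-neighbours has been output. Hence the first time any vertex of $\{m: j\preccurlyeq m\}$ becomes a candidate, it must be through an edge from outside this set, and the only such vertex is $j$ itself. So $j$ is output before every $k$ with $j\preccurlyeq k$.

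\emph{Properties (b) and (c).} For the breadth-first map, removing $i$'s outgoing edges changes no distance of any vertex outside $\{m: i\preccurlyeq m\}$, apart from deleting some descendants. It cannot decrease the number of vertices at distance $d(i)$ or at smaller distances, so it can only add buyers ahead of $i$ in $i$'s layer. This gives a domination in which the prefix before $i$ under $G$ is contained in the prefix under $G'$: the set of lower layers is identical, and the uniformly random position of $i$ in its own layer has a (weakly) smaller predecessor set when the layer is smaller, by a coupling of uniform orders that restricts a random order of the larger layer to the smaller one. Deleting Sybil nodes dominated by $i$ only removes vertices at distance greater than $d(i)$, plus possibly vertices in $i$'s own layer, so the same coupling gives (c) in the reverse direction.

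For $f_g$, I would couple the two runs step by step until $i$ is output. Every vertex dominated by $i$ can become a candidate only after $i$ is output, so before $i$'s selection the candidate sets of the two runs agree on all vertices not dominated by $i$. Removing edges out of $i$ or deleting nodes dominated by $i$ therefore changes nothing before $i$ is chosen. In (b), the removed edges may disconnect vertices not dominated by $i$; this is ruled out, since $G'$ keeps only reachable vertices and any vertex reachable only via $i$ is dominated by $i$. The prefix distributions then coincide exactly, giving dominance in both directions, hence (b) and (c) with equality. The second claim in (b) for the breadth-first map needs the same care about reachability.

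\emph{Main obstacle.} The delicate step is (b) for the breadth-first map. Removing an edge out of $i$ can increase the distances of vertices that are not dominated by $i$, moving them to later layers, possibly into $i$'s own layer or beyond. I would handle this by noting that $i$'s own distance and all vertices with distance less than or equal to $d(i)-1$ are unaffected, since shortest paths to them never use $i$'s outgoing edges. Vertices can therefore only leave layers strictly above $d(i)$ or enter $i$'s layer from nowhere. Consequently the prefix before $i$ under $G$ (all lower layers plus a uniform random subset of $i$'s layer) is contained, under the uniform-order coupling, in the prefix under $G'$. I expect most of the detailed verification to lie in this case.
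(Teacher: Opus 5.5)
Your reduction and overall strategy match the paper's. By Theorem~\ref{SPmap} it suffices that both maps are Sybil-proof maps, and this follows if the law of the set of buyers placed before $i$ is unaffected by $i$ dropping out-edges or by deleting nodes $m$ with $i\preccurlyeq m$. The paper writes $q_A$ as a product of stepwise selection probabilities $p_j$ and asserts each is unchanged; your step-by-step coupling is the same idea. Your $f_g$ argument is correct, and it supplies what the paper leaves as ``the characteristics of $f_g$'': out-edges of $i$ are never read before $i$ is output, and no $m\neq i$ with $i\preccurlyeq m$ can enter $L$ before then. So the two runs coincide up to $i$'s selection. Your order-preservation arguments are also fine.

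The breadth-first part, however, rests on false claims, and (c) does not follow as written.
\begin{itemize}
\item Removing $i$'s out-edges cannot move vertices ``into $i$'s own layer''.
\item Deleting nodes dominated by $i$ cannot remove vertices ``in $i$'s own layer'', since every $m\neq i$ with $i\preccurlyeq m$ has $d(m)\ge d(i)+1$.
\item Your opening claim that no distance outside $\{m: i\preccurlyeq m\}$ changes contradicts your own obstacle paragraph.
\end{itemize}
For (b) the error happens to point in the harmless direction, but for (c) it does not. Deleting the dominated nodes also deletes their out-edges. By your own account, that could push vertices into $i$'s layer of $G'$ and enlarge $i$'s predecessor set under $f(G')$. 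That is the wrong direction for $f(G')$ to dominate $f(G)$, and the sentence ``the same coupling gives (c) in the reverse direction'' never rules it out.

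The missing observation is this: any $s$-path that uses an out-edge of $i$, or passes through a node dominated by $i$, has length at least $d(i)+1$. Consequently:
\begin{itemize}
\item every vertex with $d(x)\le d(i)$ keeps its distance, and this includes layer $d(i)$ itself, not only layers up to $d(i)-1$;
\item every vertex with $d(x)>d(i)$ can only move farther out or disappear, because deletions never decrease distances.
\end{itemize}
Hence $L_1,\dots,L_{d(i)}$ are identical as sets in $G$ and $G'$ in both (b) and (c). The distribution of $i$'s predecessors is therefore exactly unchanged, as the paper asserts, and your ``main obstacle'' disappears.
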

	
	\begin{proof}
		By Theorem \ref{SPmap}, we only need to prove $f_g$ and the breadth-first map $f$ are Sybil-proof maps.
		
		It is easy to check that $f_g$ and $f$ are order-preserving maps. Moreover, for any buyer $i$ and any set of other buyers $A$, the probability $q_{A}= \Pr \nolimits_{\sigma \sim f_g} [ \{ \sigma(j) \mid j \in A \} = \{ 1, 2, \ldots, \sigma(i)-1 \}]$ remains unchanged when $i$ does not invite some of her neighbors and creates Sybil identities. The reason is as follows. Suppose $A=\{ a_1, a_2, \ldots, a_{k-1}\},  \, p_1 = \Pr \nolimits_{\sigma \sim f_g}[\sigma(a_1)=1], \, p_j = \Pr \nolimits_{\sigma \sim f_g}[\sigma(a_j)=j \mid \sigma(a_1)=1, \ldots, \sigma(a_{j-1})=j-1] \, (2 \le j \le k)$, where $a_k=i$. 
		Then removing the outgoing edges of $i$ and creating Sybil nodes do not change $p_j \, (1 \le j \le k)$ due to the characteristics of $f_g$ and $f$. Therefore, $q_{A}$ remains unchanged and thus $f_g$ and $f$ are Sybil-proof maps.
		
		Therefore, $f_g$-PDM and $f$-PDM are both SP. \qed
	\end{proof}
	
	Moreover, for the generalized breadth-first map, when selecting the $i$-th buyer in $L$, the probability can be positively correlated to the number of their neighbors. In this case, the mechanism is IC but no longer SP, in exchange for more incentives to invite neighbors.
	
	\subsection{Approximate Efficiency}
	
	\begin{figure}[t!]
		\begin{center}
			\begin{tikzpicture}[->,>=stealth,shorten >=1pt,auto, semithick]
				
				\node at (0,0) [circle, draw, minimum width=0.8cm, minimum height=0.8cm] (s) {$s$};
				\node at (2, 0) [circle, draw, label=above:$a$, minimum width=0.8cm, minimum height=0.8cm] (a)  {$0$};
				\node at (4, 0) [circle, draw, label=above:$b$, minimum width=0.8cm, minimum height=0.8cm] (b) {$1$};
				
				\path (s) edge node {} (a)
				(a) edge node {} (b);
			\end{tikzpicture}
			\caption{Example of inefficiency. The seller $s$ only knows buyer $a$ with valuation $0$. Buyer $a$ only knows buyer $b$ whose valuation is $1$ and buyer $b$ does not know anyone. In this case, the existing mechanisms assign the item to $a$ for free, so the social welfare is $0$.}
			\label{fig3}
		\end{center}
	\end{figure}
	
	Li et al. ~\cite{ic} has shown that IR, IC, and weakly budget balanced diffusion mechanisms are not efficient. Therefore, a natural idea is to seek for the approximate version of efficiency.
	
	\begin{definition}
		\label{def: approx-efficiency}
		A diffusion auction mechanism $M$ is {\em $(\epsilon, \delta)$-efficient} if for any $\boldsymbol{t} \in \boldsymbol{T}$, 
		\begin{align*}
			\epsilon \cdot \mathbf{E}[W(\boldsymbol{t}, M)] + \delta \ge \max \limits_{i \in N, t_i \ne \nil} v_i.
		\end{align*}
	\end{definition}
	
	However, as Table \ref{tab:example} shows, the existing IC mechanisms, including IDM, CDM, FDM, NRM, STM, and SCM, are not $(\epsilon, \delta)$-efficient for any $\delta \in (0, 1)$. We give an example to show the inefficiency. In Figure \ref{fig3}, there are only two buyers in the network. Buyer $a$ becomes the winner and does not need to pay under the above mechanisms. In this case, the social welfare is $0$. However, the valuation of $b$ is $1$, so for any $\delta \in (0, 1)$, these mechanisms are not $(\epsilon, \delta)$-efficient.

	By contrast, $f$-PDM can ensure non-trivial approximate efficiency. For the case in Figure \ref{fig3}, buyer $b$ wins the item with probability 1 with a charge of $0.5$, and buyer $a$ gets a reward of $0.5$. Compared to other mechanisms, all the buyers have higher utility and the social welfare changes from $0$ to $1$. The guarantee of approximate efficiency is formalized as follows.
	
	\begin{theorem} \label{appeff}
		For any incentive diffusion map $f$ and any $\delta \in (0, 1)$, $f$-PDM is $(\frac{1}{2 \delta}, \delta)$-efficient.
	\end{theorem}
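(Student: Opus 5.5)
The plan is to fix an arbitrary realization of the path drawn from $f(G)$, bound the expected welfare of PDM on that path from below, and then average over the distribution $f(G)$. Since the bound will hold path by path, no property of $f$ beyond producing a permutation of $N$ is needed. Write $v^* = \max_{i\in N} v_i$, and work with truthful bids, since $(\epsilon,\delta)$-efficiency is stated for the true types $\boldsymbol{t}$. If $v^* \le \delta$, the inequality $\frac{1}{2\delta}\mathbf{E}[W]+\delta \ge v^*$ is trivial because $\mathbf{E}[W]\ge 0$. So the only substantive case is $v^* > \delta$, and there it suffices to show $\mathbf{E}[W] \ge 2\delta(v^*-\delta)$.

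On a fixed path $1,2,\ldots,n$, consider the running maxima $m_i = \max_{j\le i} v_j$, with $m_0$ treated specially. By the PDM allocation rule:
\begin{itemize}
\item buyer $1$ wins with probability $1 - v^* + v_1$;
\item each buyer $i>1$ that sets a new record wins with probability $v_i - m_{i-1}$;
\item every other buyer has probability $0$.
\end{itemize}
The records $v_1 = m_{(0)} < m_{(1)} < \cdots < m_{(r)} = v^*$ therefore partition the interval $[0,1]$ as follows. The record with value $m_{(k)}$ receives probability mass equal to the length of $(m_{(k-1)}, m_{(k)}]$, and buyer $1$ gets $[0,v_1]\cup(v^*,1]$. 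Hence
\begin{align*}
\mathbf{E}[W] = \sum_{k=1}^{r} (m_{(k)}-m_{(k-1)})\, m_{(k)} + (1-v^*+v_1)\, v_1 \;\ge\; \int_{v_1}^{v^*} x\,dx = \frac{(v^*)^2 - v_1^2}{2}.
\end{align*}
The key observation is that the winner's value always dominates the point of the interval that selects it. Adding buyer $1$'s contribution, we get $\mathbf{E}[W] \ge \frac{(v^*)^2 - v_1^2}{2} + (1-v^*+v_1)v_1$, and in particular $\mathbf{E}[W]\ge \frac{(v^*)^2}{2} - \frac{v_1^2}{2} + v_1^2 \ge \frac{(v^*)^2}{2}$.

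To finish, I must show $\frac{(v^*)^2}{2} \ge 2\delta(v^*-\delta)$. This rearranges to $(v^*)^2 - 4\delta v^* + 4\delta^2 = (v^*-2\delta)^2 \ge 0$, which always holds. Combined with the trivial case, this gives $\frac{1}{2\delta}\mathbf{E}[W] + \delta \ge \frac{(v^*)^2}{4\delta}+\delta \ge v^*$ for every path, by AM--GM, and taking expectation over $f(G)$ preserves it.

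The step I expect to need care with is the welfare lower bound per path, specifically checking that the PDM probabilities really form this interval partition. Non-record buyers get $\max\{0,\cdot\}=0$, and buyer $1$'s mass $1-v^*+v_1$ is nonnegative. I also need to confirm that the extra payment in step 3 of $f$-PDM does not affect the allocation, so welfare is exactly the PDM welfare. Beyond this, the argument is a direct computation, and the particular $\epsilon=\frac{1}{2\delta}$ comes out of AM--GM on $\frac{(v^*)^2}{4\delta}+\delta$.
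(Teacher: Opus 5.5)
Your proof is correct and follows the paper's argument: fix a realized path, lower-bound PDM's welfare by $\frac{(v^*)^2}{2}$ using $v_i(v_i-v_{i-1}) \ge \frac{v_i^2-v_{i-1}^2}{2}$ (your interval/integral picture) together with $(1-v^*+v_1)v_1 \ge v_1^2$, and finish with $\frac{(v^*)^2}{4\delta}+\delta \ge v^*$. The case split on $v^* \le \delta$ is harmless but unnecessary, since $(v^*-2\delta)^2 \ge 0$ holds unconditionally.
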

	
	\begin{proof}
		Without loss of generality, suppose $f(G) = (1, 2, \ldots , n)$. Since $f$-PDM is IC and the buyer can be ignored if her bid is lower than the highest bid among the buyers before her, we can assume $v_1 \le v_2 \le \cdots \le v_n$.  The expected social welfare
		\begin{align*}
			\mathbf{E}[W(\boldsymbol{t}, M)] = &\; \sum_{i=1}^n v_i \cdot \pi_i(\boldsymbol{t}) 
			= \; (1-v_n+v_1) \cdot v_1 + \sum_{i=2}^n v_i \cdot (v_i - v_{i-1}) \\
			\ge &\; v_1 \cdot v_1 + \sum_{i=2}^n \frac{v^2_i - v^2_{i-1}}{2} \ge \frac{v^2_n}{2}.
		\end{align*}
		Then $\frac{1}{2 \delta} \cdot \mathbf{E}[W(\boldsymbol{t}', M)] + \delta \ge v_n.$ Therefore, $f$-PDM is $(\frac{1}{2 \delta}, \delta)$-efficient. \qed
	\end{proof}
	
	\begin{remark} %[The limitation of $f$-PDM]
		% $(\epsilon,\delta)$-approximate efficiency意味着除了极少数情况，机制都能够保证近似效用。例如，假设每个人的价值分布是iid的，并且存在常数c，$\Pr[v \le c] \le 1 $，那么对于有n个买家的扩散拍卖，无论社会网络的结构如何，PDM有$1-p^n$的概率满足$(\frac{2}{c}, 0)$-approximate efficiency.
		$(\epsilon,\delta)$-approximate efficiency implies that the mechanism guarantees approximate efficiency in all but a negligible fraction of cases. For PDM, as shown in the proof of Theorem \ref{appeff}, its expected welfare $\mathbf{E}[W(\boldsymbol{t}, M)] \ge \frac{v^2_n}{2}$, where $v_n$ is the highest bid. This implies $(\frac{2}{v_n}, 0)$-approximate efficiency. As long as the highest valuation $v_n$ is not arbitrarily close to zero (e.g., $v_n \ge c >	0$), our mechanism achieves a constant-factor multiplicative approximation. For example, assuming that each individual's value distribution is i.i.d., and there exists a constant $c > 0$ such that $\Pr[v \le c] = p < 1 $, then for a diffusion auction with $n$ buyers, regardless of the social network structure, the PDM achieves $(\frac{2}{c}, 0)$-approximate efficiency with probability $1-p^n$.	
		
		%Theorem \ref{appeff} shows that $f$-PDM ensures $(\epsilon,\delta)$-approximate efficiency with $\epsilon=\frac{1}{2\delta}$, where the $\delta$ must be non-zero (otherwise the $\epsilon$ goes to infinity). This indicates that the mechanism does not satisfy any $(\epsilon,0)$-efficiency with finite $\epsilon$, which is a limitation of our mechanism.
		
		%However, for any $\epsilon$, $(\epsilon, 0)$-efficiency is hard to achieve for any diffusion auction mechanism. An idea is to normalize the probabilities, i.e., the winning probability depends exclusively on the proportional share of the aggregate bid. However, it leads to IC violation, whose proof is shown in Appendix \ref{AppE0}. 
	\end{remark}

	\subsection{Collusion-proofness}
	
	Previously, SP provides a crucial guarantee against a single agent manipulating the outcome using multiple identities. However, it does not address the more general and practical threat of distinct agents forming a coalition to act in concert. We refer to such a coalition as a {\em cartel}.
	
	To provide a rigorous foundation for our analysis of cartel behavior, we make two reasonable assumptions about its structure and objective:
	
	\begin{enumerate}[label*=\arabic*)]
		\item {\em Connectivity}: The subgraph induced by the cartel members must be connected. This captures the practical requirement that members must be able to communicate and coordinate their strategies.
		\item {\em Unified objective}: The cartel acts as a single strategic entity to maximize its collective payoff. We model this by assuming its members agree on a common effective valuation for the item.
	\end{enumerate}
	
	Building on this model of a cartel, the property of collusion-proofness (CP), which ensures that no such group can benefit from coordinated misreporting, is formally defined as follows.
	
	%写成卡式积的形式
	\begin{definition} \label{CPdef}
		A diffusion auction mechanism $M$ is {\em collusion-proof (CP for short)} if for any cartel $C = \{ i_1, i_2, \ldots, i_k \} \subseteq N, \, (t'_{i_1}, t'_{i_2}, \ldots, t'_{i_k}) \in \boldsymbol{T}_{i_1} \times \boldsymbol{T}_{i_2} \times \cdots \times \boldsymbol{T}_{i_k}$, and $t_{i_j} = (v_{i_j}, r_{i_j})$ is the true type of $i_j \, (j = 1, 2, \ldots, k)$, where $ v_{i_1} = v_{i_2} = \cdots = v_{i_k}, r'_{i_j} \subseteq r_{i_j} \cup C$, and the subgraph $G[C]$ is connected, the following property holds.
		\begin{align*}
			\; \sum_{j=1}^k \mathbf{E}[u_{i_j}(t_{i_1}, t_{i_2}, \ldots, t_{i_k}, \boldsymbol{t}'_{-C}, M)] 
			\ge \; \sum_{j=1}^k \mathbf{E}[u_{i_j}(t'_{i_1}, t'_{i_2}, \ldots, t'_{i_k}, \boldsymbol{t}'_{-C}, M)].
		\end{align*}
		
	\end{definition}
	
	Different from group-strategyproofness, collusion-proofness treats colluding members as a single entity with an identical valuation. It guarantees that no matter how each member acts, the coalition's overall utility cannot increase (though some members' individual utility may decrease) when they truthfully report. In contrast, group-strategyproofness ensures that no member's utility will decrease. Moreover, CP and SP are dual: CP ensures that no coalition of buyers can profit by misreporting or reducing their collective scale, while SP guarantees that no individual buyer can benefit by misreporting or inflating their scale. In other models, the potential collusive buyer sets (called cartels) are common knowledge ~\cite{collusion1,collusion2} or only multiple buyers becoming one buyer is considered ~\cite{collusion3}. However, our definition relaxes the restrictions on cartels, which implies our definition is stronger.
	
	\begin{proposition}
		The CP problem defined by multiple cartels as common knowledge can be reduced to the CP problem in Definition \ref{CPdef}. 
	\end{proposition}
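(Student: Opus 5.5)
We reduce by showing that a mechanism satisfying Definition \ref{CPdef} is also collusion-proof in the common-knowledge model. In that model a family of cartels $\mathcal{C}=\{C_1,\ldots,C_m\}$ is fixed and publicly known, each $C_l$ acts as one strategic entity, and the requirement is that no $C_l$ can raise its members' total expected utility by any joint deviation, given the reports of everyone outside $C_l$. The plan is to show that each such condition is an instance of the inequality in Definition \ref{CPdef}.

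Fix one cartel $C_l\in\mathcal{C}$ and an arbitrary joint deviation $(t'_{i})_{i\in C_l}$. Fix also arbitrary reports $\boldsymbol{t}'_{-C_l}$ of all other buyers, including members of other cartels who may themselves be deviating. The key observation is that Definition \ref{CPdef} quantifies over all $\boldsymbol{t}'_{-C}\in\boldsymbol{T}_{-C}$. So the behaviour of the other cartels is just one particular choice of $\boldsymbol{t}'_{-C_l}$, and no equilibrium or joint-deviation argument is needed.

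It remains to check that $C_l$ meets the side conditions of Definition \ref{CPdef}.
\begin{enumerate}[label*=\arabic*)]
\item Connectivity of $G[C_l]$ follows from the same communication assumption the common-knowledge model imposes; if that model drops it, we restrict to connected cartels or split $C_l$ into components.
\item The equal-valuation condition is the unified-objective modelling of the cartel: its members agree on a common effective valuation.
\item The constraint $r'_{i_j}\subseteq r_{i_j}\cup C_l$ holds because in the common-knowledge model members may exchange edges only inside the cartel.
\end{enumerate}
With these checks, Definition \ref{CPdef} yields the desired inequality for $C_l$. Since $C_l$ was arbitrary, every cartel in $\mathcal{C}$ is deterred.

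The reduction map is therefore the identity on mechanisms: the instance $(\mathcal{C},\boldsymbol{t})$ maps to the collection of single-cartel instances $\{(C_l,\boldsymbol{t}'_{-C_l})\}$. The converse direction is not needed, since the claim is only that the common-knowledge problem reduces to ours.

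The main obstacle is the modelling mismatch around connectivity and equal valuations. Common-knowledge models usually let cartels be arbitrary sets of buyers with heterogeneous values, so the reduction as stated only covers cartels of the form allowed in Definition \ref{CPdef}. I would state this restriction explicitly. For a disconnected cartel, I would first try decomposing it into connected components and applying the definition to each; this fails whenever utility is transferred between components, so in that case I would simply exclude disconnected cartels in the statement.
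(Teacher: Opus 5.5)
Your argument proves only a weaker statement, and you concede this yourself. The identity reduction can invoke Definition~\ref{CPdef} for a cartel $C_l$ only when $G[C_l]$ is connected and its members' true valuations coincide. In the common-knowledge model cartels are arbitrary known sets with heterogeneous values. Your check~2) asserts equal valuations from the ``unified objective'' modelling, but that is not a feature of that model. The check is therefore assumed rather than verified. Your fallbacks do not prove the proposition as stated: restricting the statement changes the claim, and splitting into components breaks, as you note, once utility is pooled across components.

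The missing idea is that the reduction need not be the identity on mechanisms. Because $C_1,\dots,C_k$ are common knowledge, the designer can build a new mechanism $M'$ that preprocesses the reports so every cartel meets the hypotheses of Definition~\ref{CPdef} by construction. The paper's $M'$ does three things:
(i) it adds edges so that each $G[C_l]$ is connected (e.g., complete);
(ii) it overwrites each buyer's bid with the highest bid in her cartel;
(iii) it then runs $M$.

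Under $M'$, each cartel is a connected group with a common valuation, namely the cartel's maximum. CP of $M$ then applies to each cartel for arbitrary reports of everyone else. This is where your observation that Definition~\ref{CPdef} quantifies over all $\boldsymbol{t}'_{-C}$ is used correctly, and it makes truthful reporting dominant under $M'$. If you replace the identity map with this transformation, your per-cartel argument goes through without restricting the class of cartels.
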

	
	\begin{proof}
		The reduction is as follows. If a mechanism $M$ is CP, given any $k$ cartels $C_1, \ldots, C_k$, the mechanism $M'$ works as follows.
		
		\begin{enumerate}[label*=\arabic*)]
			\item Set each cartel connected (for example, set $G[C_1], \ldots, G[C_k]$ to be complete graphs).
			\item Set each buyer's bid as the highest bid in her cartel. 
			\item Perform $M$.
		\end{enumerate} 
		
		Since $M$ is CP, each cartel's valuation equals the highest valuation in the cartel and every buyer will report truthfully, i.e., the highest valuation in her cartel. Therefore, being truthful is a dominant strategy under $M'$, since every buyer's bid will be set as the highest bid in her cartel. \qed
	\end{proof}
	
	Note that the cartel can be arbitrary and they can misreport or even do not participate to manipulate the auction and gain more benefits, CP is hard to achieve. In fact, no diffusion mechanism has achieved CP so far. For example, Figure \ref{fig4} shows a collusion case. If no one colludes, buyer $c$ wins and pays $0.1$, others get no reward under IDM or CDM. However, if $a$ and $b$ collude and one of them does not participate in the auction, they will win the item for free. Therefore, diffusion auctions are at risk of collusion.
	
	\begin{figure}[t!]
		\begin{center}
			\begin{tikzpicture}[->,>=stealth,shorten >=1pt,auto, semithick]
				
				\node at (0,0) [circle, draw, minimum width=0.8cm, minimum height=0.8cm] (s) {$s$};
				\node at (2, 0.75) [circle, draw, label=above:$a$, minimum width=0.8cm, minimum height=0.8cm] (a)  {$0.1$};
				\node at (2, -0.75) [circle, draw, label=below:$b$, minimum width=0.8cm, minimum height=0.8cm] (b) {$0.1$};
				\node at (4, 0) [circle, draw, label=above:$c$, minimum width=0.8cm, minimum height=0.8cm] (c) {$1$};
				
				\path (s) edge node {} (a)
				(s) edge node {} (b)
				(a) edge node {} (b)
				(a) edge node {} (c)
				(b) edge node {} (a)
				(b) edge node {} (c);
			\end{tikzpicture}
			\caption{Example of collusion. The seller $s$ knows buyer $a$ and buyer $b$, both of whom have valuation $0.1$ and know buyer $c$ and each other. Buyer $c$, whose valuation is $1$, does not know anyone. If $a$ and $b$ collude, they can win the item and have $0.1$ utility.}
			\label{fig4}
		\end{center}
	\end{figure}
	
	However, $f$-PDM is collusion-proof after making slight changes.
	
	\begin{theorem} \label{collu}
		If $f$ is the breadth-first map and the first buyer in $f(G)$, denoted by $i$, pays extra $\frac{1}{2}(v^*_{N_{-C_i}})^2$ rather than $\frac{1}{2}(v^*_{N_{-i}})^2$, $f$-PDM is CP, where $N_{-C_i}$ is the node set of connected components without $i$ after $s$ being removed, i.e., the set of buyers that cannot collude with $i$.
	\end{theorem}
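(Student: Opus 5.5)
The plan is to treat the cartel $C$ as a single strategic entity with common valuation $v_C$, and to reduce collusion-proofness to the arguments already used for Theorem~\ref{Thm1} (SP of PDM) and for IC of $f$-PDM. Since $G[C]$ is connected, $C$ lies inside a single connected component $K$ of $G$ with $s$ removed. So $N_{-C_i}=N\setminus K$ whenever the first buyer $i$ lies in $K$. This set is determined by buyers outside $K$, whose reports the cartel cannot affect, because they are reachable from $s$ without passing through $K$. The extra charge therefore cannot be manipulated by the cartel; this is exactly why $N_{-i}$ is replaced by $N_{-C_i}$. I will split the proof into two layers: first condition on a fixed realized path, then compare the distributions produced by the breadth-first map under truthful and manipulated diffusion.

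\emph{Step 1 (fixed path).} Fix a permutation drawn from $f(G)$. As in the proof of Theorem~\ref{Thm1}, buyers whose bids do not exceed the running maximum can be discarded, so the relevant bids are increasing along the path. Payments between cartel members are internal transfers, so the collective utility is $v_C$ times the total winning probability of cartel members, minus payments to outsiders, plus rewards from outsiders. Rewards arise only if a cartel member is the first buyer.
\begin{itemize}
\item If the first buyer is in $C$, the telescoping computation of Theorem~\ref{Thm1} applies: misreported intermediate bids of members behave like Sybil bids. The collective utility is then maximized at $(1-v^*_N+v_C)v_C+\frac{(v^*_N-v_C)^2}{2}-\frac{1}{2}(v^*_{N\setminus K})^2$.
\item Otherwise, let $m$ be the highest outsider bid before the first cartel record. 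The cartel's total is at most $(v_C-m)^2/2$ by the same derivative argument, and truthful reporting attains this bound.
\end{itemize}
Interleaved outsiders with high bids only raise the threshold the cartel pays against, so the interleaving case reduces to the case of a contiguous block.

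\emph{Step 2 (distribution).} The cartel may hide edges or members, and may decline to invite. Under the breadth-first map this changes the layers $L_j$, which changes the probability that a cartel member appears before a given set of outsiders. I will show that the truthful distribution stochastically dominates, for the cartel as a unit, every manipulated one. This mirrors the IC proof: let $u^0\ge u^1\ge\cdots$ be the cartel's fixed-path optimal utilities, indexed by the highest preceding outsider bid. The ordering holds because $u^0\ge v_C(1-v^*_N)+\frac12 v_C^2\ge \frac12 v_C^2\ge \frac12(v_C-m)^2$, the same estimate as in Lemma~\ref{IR} with $N_{-C_i}$ in place of $N_{-i}$. Removing members or edges can only push cartel members to deeper layers or shrink their presence in a layer. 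In a uniformly random order within a layer, the earliest cartel member then appears no earlier in distribution relative to the outsider set preceding it. The iteration argument of the IC proof then gives truthful utility $\ge$ manipulated utility.

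\emph{Main obstacle.} The hardest step is Step 2. Hiding a cartel member changes the distance from $s$ of other buyers inside $K$, and can even shrink the layers that contain outsiders, so that outsiders move earlier. I will first try to show that any outsider whose layer changes lies inside $K$ and hence comes after some cartel member's diffusion. I would then argue layer by layer that, for every threshold set $B_j$ of low-bid outsiders, the probability that the earliest cartel member is preceded only by members of $B_j$ does not increase under manipulation. If this monotonicity fails for general in-component outsiders, the fallback is a coupling: couple the uniform random orders within each layer so that the cartel's earliest position is pointwise no later under truthful reporting.
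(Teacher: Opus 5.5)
\textbf{Overall.} Your architecture matches the paper's: first a fixed-path bound showing the cartel can do no better than its foremost member reporting truthfully, then an argument about where the breadth-first map places that member. The gap is Step~2. You flag it as the main obstacle and leave it as a plan with a fallback, and the one observation that settles it is missing.

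\textbf{The missing observation in Step~2.} Let $l$ be the smallest distance from $s$ to a cartel member. Argue by induction on $d\le l$: the buyers at distance $d$ are exactly the newly reached neighbours (via $r_s$ or reported edges) of buyers at distance $d-1<l$. All of those buyers are outsiders, whose reports are fixed. Hence the layers $L_1,\dots,L_l$ are identical under truthful and manipulated cartel reports, including the set of cartel members in $L_l$. (An invited member cannot remove herself; bidding $0$ and inviting nobody leaves her in $L_l$.) Everything preceding the foremost member in the breadth-first order lies in $L_1\cup\dots\cup L_l$. So the distribution of her predecessor set is \emph{unchanged}, not merely dominated. This is precisely the paper's invariance of $l(\boldsymbol{t})$, $a(\boldsymbol{t})$, $b(\boldsymbol{t})$, and combined with Step~1 it finishes the proof.

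The effect you worry about, outsiders moving to earlier layers once a member is hidden, happens only in layers beyond $l$. Those never precede the foremost member, so no layer-by-layer monotonicity or coupling is needed.

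Your monotonicity claim also covers only removals. Definition~\ref{CPdef} lets members invite fellow members they do not know ($r'_{i_j}\subseteq r_{i_j}\cup C$), which can move some members to shallower layers, though never to depth $\le l$. The invariance argument handles this deviation; your sketch does not.

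\textbf{Two smaller points in Step~1.}
\begin{enumerate}
\item The extra charge is not literally unmanipulable. By withholding edges the cartel can split $K$ in the reported graph and enlarge $N_{-C_i}$. This only raises $v^*_{N_{-C_i}}$, so the direction is harmless, but it should be stated.
\item More substantively, ``interleaving reduces to a contiguous block'' is false bid-for-bid when the first buyer is in $C$, because interleaved outsiders' payments return to the cartel as rewards. Take $v_C=0.2$ and the path (member $0.2$, outsider $0.5$, member $0.6$, outsider $0.9$). The cartel gets $0.41$, versus $0.365$ once the two members are made adjacent, and truthful reporting gives $0.445$.

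What works is a pointwise bound. Before the charge, the cartel's payoff is $v_C+\int_{v'_1}^{v^*_N}(g(x)-v_C)\,dx$. Here $g=v_C$ on record intervals $[m_j,v'_j]$ owned by members, and $g(x)=x$ on those owned by outsiders. Outsider intervals lie below the largest participating outsider bid $M_o$, so the payoff is at most $v_C+\frac12(\max\{0,M_o-v_C\})^2$. Truthful reporting attains this, and excluding outsiders only lowers $M_o$.
\end{enumerate}
The paper's one-line appeal to Sybil-proofness skips this same interleaving issue.
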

	
	\begin{proof}
		First of all, we prove PDM is CP. Given any cartel $C = \{i_1, i_2, \ldots, i_k\} \subseteq N$ and $i_1 \le i_2 \le \cdots \le i_k$, Sybil-proofness shows that multiple buyers' utility is not higher than one buyer's. Then the cartel's utility $u_C$ is not higher than $u_{i_1} = u_{i_1}(t_{i_1}, \boldsymbol{t}'_{-i_1}, M) $. When $i_1, i_2, \ldots, i_k$ report truthfully, their utility $u_C$ achieves the maximum value $u_{i_1}$. Therefore, PDM is collusion-proof.
		
		From the above proof, we know that the cartel's utility is not higher than its foremost member's utility after the map. Let $l(\boldsymbol{t})$ be the shortest distance from $s$ to the cartel, and $a(\boldsymbol{t}), b(\boldsymbol{t})$ be the number of cartel members and non-cartel members whose distance from $s$ is not longer than $l(\boldsymbol{t})$. When the cartel members misreport $\mathbf{t'}$, $l(\boldsymbol{t}) = l(\mathbf{t'}), a(\boldsymbol{t}) = a(\mathbf{t'}), b(\boldsymbol{t}) = b(\mathbf{t'})$ since $f$ is the breadth-first map. Therefore, the utility of the cartel's foremost member cannot be higher when they misreport, and thus $f$-PDM is CP. \qed
	\end{proof}
	
	In the case of Figure \ref{fig4}, the breadth-first map $f$ will map the graph to $(a, b, c)$ or $(b, a, c)$, and $a$ or $b$ does not need to pay additionally. Therefore, in this case, CP is satisfied under $f$-PDM.

	\subsection{The Seller's Revenue}
	In the last part of this section, we discuss the seller's revenue, which is also the utility of the seller. Similar to efficiency, we define the approximate version of the seller's revenue.
	
	\begin{definition}
		A diffusion auction mechanism $M$ has {\em $(\epsilon, \delta)$-approximate revenue} if for any $\boldsymbol{t} \in \boldsymbol{T}$ and any IC diffusion mechanism $M'$,
		\begin{align*}
			\epsilon \cdot \mathbf{E}[u_s(\boldsymbol{t}, M)] + \delta \ge \mathbf{E}[u_s(\boldsymbol{t}, M')].
		\end{align*}
	\end{definition}
	
	In the definition of approximate efficiency, the right side of the inequality is $\max \nolimits_{i \in N, t_i \ne \nil} v_i$, because $\mathbf{E}[W(\boldsymbol{t}, VCG)]$ equals to the highest bid due to the efficiency of VCG ~\cite{idm}. However, the diffusion mechanism that can maximize the seller's revenue is unknown, so the right side of the inequality cannot be simplified.
	
	When the network is a path graph, $f$-PDM becomes to PDM and the seller's revenue is 0, which is the same as other mechanism. In other cases, the seller's revenue of $f$-PDM is $\frac{1}{2}(v^*_{N_{-i}})^2$, where $N_{-i}$ is the set of buyers who can participate in the auction when $i$ is not invited. As long as $i$ is not the diffusion critical node of the buyer with the highest bid, $u_s = \frac{1}{2}(v^*_{N})^2$. Therefore, the approximate revenue of $f$-PDM is formalized as follows.
	
	\begin{theorem} \label{apprev}
		For any $\delta \in (0, 1)$, $f$-PDM has $(\frac{1}{2 \delta}, \delta)$-approximate revenue when the first buyer after the map is not the diffusion critical node of the buyer with the highest bid.
	\end{theorem}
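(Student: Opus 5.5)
Under the stated hypothesis, the approach is to sandwich the benchmark between the highest valuation and $\delta$ plus $\frac{1}{2\delta}$ times the revenue of $f$-PDM.

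\textbf{Lower bound on the revenue of $f$-PDM.} Fix the realized path and let $i$ be its first buyer. In PDM every payment $p_j^j$ by a winner $j>1$ is exactly offset by the reward $p_1^j=-p_j^j$ to the first buyer, and no payment is made when the first buyer wins. So the seller's expected revenue equals the extra charge $\frac{1}{2}(v^*_{N_{-i}})^2$ in step 3 of $f$-PDM. By hypothesis, $i$ is not a diffusion critical node of a highest bidder $h$, hence $h\in N_{-i}$ and $v^*_{N_{-i}}=v^*_N$. Averaging over the distribution $f(G)$ gives $\mathbf{E}[u_s(\boldsymbol{t},f\text{-PDM})]=\frac{1}{2}(v^*_N)^2$, where $v^*_N$ is the highest valuation because the mechanism is IC and we may evaluate at truthful reports. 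If the hypothesis is meant to hold only for some realizations of $f(G)$, I would take it as holding for every first buyer in the support, as the surrounding text suggests.

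\textbf{Upper bound on any IC mechanism's revenue.} For any IC (and implicitly IR and feasible) mechanism $M'$, the seller's revenue is at most the expected welfare. Summing buyer utilities and seller revenue gives $\sum_i \pi_i v_i\le v^*_N$, and IR makes buyers' utilities nonnegative, so $\mathbf{E}[u_s(\boldsymbol{t},M')]\le v^*_N$. This is the step I expect to be the main obstacle, since the definition quantifies only over IC mechanisms and does not mention IR. I would invoke IR explicitly as part of the implicit class of diffusion mechanisms considered (as with the VCG benchmark for efficiency). If needed, I would argue that without IR the comparison is vacuous, because the payments are unbounded.

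\textbf{Conclusion.} The argument then mirrors the proof of Theorem~\ref{appeff}. Write $x=v^*_N\in[0,1]$. The AM--GM inequality gives $\frac{x^2}{2\delta}+\frac{\delta}{2}\ge x$, and therefore
\begin{align*}
\tfrac{1}{2\delta}\cdot \mathbf{E}[u_s(\boldsymbol{t},f\text{-PDM})]+\delta = \tfrac{x^2}{4\delta}+\delta \ge \tfrac{1}{2}\Big(\tfrac{x^2}{2\delta}+\tfrac{\delta}{2}\Big)+\tfrac{3\delta}{4}\ge \tfrac{x}{2}+\tfrac{3\delta}{4}.
\end{align*}
This bound falls short of $x$ by a factor of $2$. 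A cleaner route is the direct inequality $\frac{x^2}{4\delta}+\delta\ge x$, which is equivalent to $(x-2\delta)^2\ge 0$. It holds for every $\delta$, so $\frac{1}{2\delta}\mathbf{E}[u_s]+\delta\ge v^*_N\ge \mathbf{E}[u_s(\boldsymbol{t},M')]$, which is the claimed $(\frac{1}{2\delta},\delta)$-approximate revenue.
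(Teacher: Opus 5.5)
Your proof is correct and follows the paper's route: under the hypothesis the seller's revenue is exactly $\frac{1}{2}(v^*_N)^2$, any competing mechanism's revenue is at most $v^*_N$, and $\frac{x^2}{4\delta}+\delta\ge x$ is just $(x-2\delta)^2\ge 0$, so your first AM--GM chain is an unnecessary detour. On your ``main obstacle'': you do not need IR as an extra assumption, because $\nil\in T_i$ and feasibility give $u_i(\nil,\cdot)=0$, so IC alone already makes each buyer's truthful expected utility nonnegative and hence revenue $\le \sum_i \pi_i v_i\le v^*_N$ (this is what the paper's terse ``because $M$ is IC'' relies on, and it also means your remark that the comparison is vacuous without IR does not apply here).
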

	
	\begin{proof}
		
		When the first buyer after the map is not the diffusion critical node of the buyer with the highest bid, $u_s(\boldsymbol{t}, f\text{-PDM}) = \frac{1}{2}(v^*_{N})^2$.
		% for any $\boldsymbol{t} \in \boldsymbol{T}$. 
		Then for any mechanism $M$, its revenue $u_s(\boldsymbol{t}, M) \le v^*_N$ because $M$ is IC. Therefore, $\frac{1}{2 \delta} \cdot u_s(\boldsymbol{t}, f\text{-PDM}) + \delta \ge \frac{1}{2 \delta} \cdot \frac{1}{2}(v^*_{N})^2 + \delta \ge v^*_N \ge u_s(\boldsymbol{t}, M).$ Therefore, $(\frac{1}{2 \delta}, \delta)$-approximate revenue is satisfied. \qed
	\end{proof}
	
	As a matter of fact, this amount of revenue can be achieved with high probability, since there is only one diffusion critical node of the buyer with the highest bid can be mapped to the first buyer of the path graph. Therefore, we have the following corollary.

	\begin{corollary}
		If $f$ is the breadth-first map and the seller has $k > 1$ neighbors, $f$-PDM has $(\frac{k}{2(k-1) \delta}, \delta)$-approximate revenue for any $\delta \in (0, 1)$.
	\end{corollary}
	
	\iffalse
	\begin{proof}
		
		Let $p$ be the probability that the first buyer after the map is the diffusion critical node of the buyer with the highest bid. When $f$ is the breadth-first map, $p \le \frac{1}{k}$, since $s$ has $k$ neighbors. Then the expected revenue $ \mathbf{E}[u_s(\boldsymbol{t}, f\text{-PDM})] = \frac{k-1}{2k}(v^*_{N})^2$ for any $\boldsymbol{t} \in \boldsymbol{T}$. Then for any mechanism $M$, $\frac{k}{2(k-1) \delta} \cdot \mathbf{E}[u_s(\boldsymbol{t}, f\text{-PDM})] + \delta \ge \frac{k}{2(k-1)\delta} \cdot \frac{k-1}{2k}(v^*_{N})^2 + \delta \ge v^*_N \ge u_s(\boldsymbol{t}, M).$ Therefore, $f$-PDM has $(\frac{k}{2(k-1) \delta}, \delta)$-approximate revenue. \qed
	\end{proof}
	\fi
	%Moreover, the seller's revenue $\frac{1}{2}(v^*_{N_{-i}})^2$ is well-designed. Because to guarantee IR and IC, the extra charge for the first buyer needs to satisfy that her expected utility is higher than the utility of being the second buyer, and $\frac{1}{2}(v^*_{N_{-i}})^2$ is a quite tight value.
	
	\section{Mechanisms on Multi-Unit Diffusion Auctions} \label{sec5}
	The multi-unit setting emerges as the natural generalization of single-unit auction theory. In this section, we generalize our mechanism to accommodate single-demand multi-unit diffusion auctions, denoted as MUPDM, which satisfies IC and achieves approximate efficiency. We further design the Sybil-proof version, named SP-MUPDM.
	
	\subsection{Multi-Unit PDM}
	
	In single-demand multi-unit auction settings, there are $m$ items for sale ($m \geqslant 1$), and each buyer demands exactly one item.
	A multi-unit diffusion auction mechanism allocates the items to the buyers and determines the corresponding payments. Additionally, properties such as IR, IC, and WBB in multi-unit settings are the same as those in single-unit auctions. However, designing IC multi-unit diffusion auction mechanisms presents significant challenges. Specifically, a naive approach of executing the $f$-PDM mechanism $m$ times does not preserve IC, as it may incentivize the first bidder to strategically lower their bid to avoid selection, thereby increasing their potential profit. For example, if $m=2$ and the network is identical to that shown in Figure \ref{fig1}, where $v_a=v_c=1, v_b=0$. The buyer $a$ can strategically report her valuation $\frac{1}{2}$ rather than $1$ and her utility increases from $\frac{3}{4}$ to $\frac{31}{32}$ when $f$ is the breadth-first map.
	
	\iffalse
	\begin{algorithm}[h]
		\caption{MUPDM}
		\label{alg1}
		\KwIn{A graph $G=(N, E)$, number of items $m$}
		\KwOut{The results of allocation and payment}
		Map the graph $G$ to the path graph $P$ drawn from $f(G)$, where $f$ is the breadth-first map \; 
		Assuming $P=(1, 2, \cdots, n)$ \; 
		\If{$n \le m$}{
			\Return{Everyone gets an item for free}\; 
		}
		\For{$i=1$ to $m$}{
			\tcp{Put the $i$-th buyer to the $i$-th list.}
			$P_i \gets [i] $\;
		}
		\For{$i=m+1$ to $n$}{
			Select an element $j$ in $\{1, 2, \cdots, m\}$ with equal probability \; 
			$P_j.append(i)$\;
		}
		\For{$i=1$ to $m$}{
			Perform PDM on $P_i$ \; 
			The first buyer $j$ in $P_i$ pays extra $\frac{1}{2}(v^*_{{P_i}_{-j}})^2$, where ${P_i}_{-j} = P_i  ~ \backslash ~ \{ ~ l ~ | ~ j \preccurlyeq l ~ \}$\;
		}
		\Return {The results of allocation and payment from above} \;
	\end{algorithm}
	\fi
	
	Therefore, we design a novel mechanism named MUPDM, which is shown as follows.
	
	\begin{definition} [MUPDM] \label{defMU}
		Given a graph $G$, multi-unit probabilistic diffusion mechanism (MUPDM for short) works as follows.
		
		\begin{enumerate}[label*=\arabic*)]
			\item Map the graph $G$ to the path graph drawn from $f(G)$, where $f$ is the breadth-first map.
			\item For the first $k = \min \{ m, |r_s|\}$ buyers, let the $i$-th buyer ( $1 \le i \le k$) be the first buyer of the $i$-th path graph $P_i$.
			\item For each remaining buyer in sequence, select a path graph uniformly at random and append her to the terminal node.
			\item For each path graph $P_i$, perform PDM on $P_i$ and the first buyer $j$ in $P_i$ pays extra $\frac{1}{2}(v^*_{{P_i}_{-j}})^2$, where ${P_i}_{-j} = P_i  ~ \backslash ~ \{ ~ l ~ | ~ j \preccurlyeq l ~ \}$.
		\end{enumerate}
		
	\end{definition}
	
	%\paragraph{Remark.} 
	Intuitively, MUPDM allocates $k = \min \{ m, |r_s|\}$ items through a two-phase process: First, it projects the network $G$ onto $k$ vertex-disjoint path graphs $\{ P_1, \ldots, P_k \}$, where each $P_i$ originates from a distinct neighbor of the seller (formally, $src(P_i) \in r_s$). Second, it executes PDM independently on each $P_i$. 
	We will use the example in Figure \ref{fig1} to facilitate a better understanding of our mechanism. Now there are $m=2$ items. In the first step, $G$ is mapped to the path graphs corresponding to Case 1 and Case 3 in Figure \ref{fig2}, each with a probability of $0.5$. Subsequently, in both scenarios, buyers $a$ and $b$ occupy the first position in one path graph each. In the third step, buyer $c$ is equally likely to be assigned to the path graph of either buyer $a$ or $b$, as illustrated in Figure \ref{figMUPDM}. Finally, the PDM is executed on each path graph. For Case 1, the winning probabilities for $a$ and $c$ are 0.4 and 0.6 respectively. The expected social welfare is $0.66$ and the seller's revenue is $0$. In case 2, the expected social welfare is $0.3 + 0.9 \times 0.9 = 1.11$ and the seller's revenue is $0.405$. By comparison, the expected social welfare of MUDAN is $0.3 + 0.9 \times 0.5 = 0.75$ and the seller's revenue of MUDAN is $0$.
	%我们同样通过图1的例子来更好的理解我们的机制。现在假设有$m=2$个物品。在执行第一步时，$G$以各0.5的概率映射到图2中case 1和case 3对应的path graph。接下来，无论哪种情况，买家$a$和买家$b$各占据一个path graph的第一个买家。在第三步时，买家$c$等可能地落在买家$a$或买家$b$的path graph中，如图7所示。最后，在每个path graph上执行PDM。对于情况1，
	
	\begin{figure}[t!]
		\begin{center}
			\begin{tikzpicture}[->,>=stealth,shorten >=1pt,auto, semithick]
				\node at (-6.2,0) (n2) { Case 1: };
				\node at (-5,0.8) [circle, draw, minimum width=0.8cm, minimum height=0.8cm] (s21) {$s$};
				\node at (-3, 0.8) [circle, draw, label=above:$a$, minimum width=0.8cm, minimum height=0.8cm] (a2)  {$0.3$};
				\node at (-5,-0.8) [circle, draw, minimum width=0.8cm, minimum height=0.8cm] (s22) {$s$};
				\node at (-3, -0.8) [circle, draw, label=above:$b$, minimum width=0.8cm, minimum height=0.8cm] (b2) {$0$};
				\node at (-1, 0.8) [circle, draw, label=above:$c$, minimum width=0.8cm, minimum height=0.8cm] (c2)  {$0.9$};
				\node at (0.8,0) (n1) { Case 2: };
				\node at (2,0.8) [circle, draw, minimum width=0.8cm, minimum height=0.8cm] (s11) {$s$};
				\node at (2,-0.8) [circle, draw, minimum width=0.8cm, minimum height=0.8cm] (s12) {$s$};
				\node at (4, 0.8) [circle, draw, label=above:$a$, minimum width=0.8cm, minimum height=0.8cm] (a1)  {$0.3$};
				\node at (6, -0.8) [circle, draw, label=above:$c$, minimum width=0.8cm, minimum height=0.8cm] (c1) {$0.9$};
				\node at (4, -0.8) [circle, draw, label=above:$b$, minimum width=0.8cm, minimum height=0.8cm] (b1)  {$0$};
				
				\path (s11) edge node {} (a1)
				(b1) edge node {} (c1)
				(s12) edge node {} (b1)
				(s21) edge node {} (a2)
				(a2) edge node {} (c2)
				(s22) edge node {} (b2);
			\end{tikzpicture}
			\caption{Example of the map of MUPFM. There are 2 cases that MUPDM might map to under the network in Figure \ref{fig1}. Then PDM is performed.}
			\label{figMUPDM}
		\end{center}
	\end{figure}

	Now we present the properties of MUPDM.
	
	\begin{theorem} \label{MUPDMThm}
		MUPDM is WBB, IR, and IC.
	\end{theorem}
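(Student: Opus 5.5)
The plan is to reduce every property to the corresponding single-path statement already established for PDM and $f$-PDM. The key structural fact is the following. Conditional on the outcome of steps 1)--3) of Definition \ref{defMU}, the realized paths $P_1,\ldots,P_k$ are vertex-disjoint. Each buyer lies on exactly one of them, and the path assignment depends only on the reported graph, never on the bids. Moreover, on each $P_i$ the mechanism is exactly PDM followed by the extra charge $\frac{1}{2}(v^*_{{P_i}_{-j}})^2$ to its head $j$, i.e.\ the $f$-PDM payment structure restricted to $P_i$. Since a buyer is single-demand and receives at most one item (from her own path), her utility is the utility she obtains in the single-item instance on her path.

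WBB then follows pathwise. On each $P_i$, PDM transfers the winner's payment to the head, so PDM contributes zero net revenue. The extra charge is nonnegative, so the seller's revenue is a sum of nonnegative terms in every realization.

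For IR, I would fix a realization of the paths and apply the argument of Lemma \ref{IR} to the path containing $i$. If $i$ is not a head, PDM on a path is IR. If $i$ is the head $j$, her utility is $v_j+\frac{1}{2}(v^*_{P_i}-v_j)^2-\frac{1}{2}(v^*_{{P_i}_{-j}})^2$. Because ${P_i}_{-j}\subseteq P_i$, the same chain of inequalities gives a value of at least $\frac{1}{2}v_j^2\ge 0$. Taking expectations over realizations gives IR.

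For IC, I would separate the bid from the diffusion report. For bids: since the map and the assignment to paths are independent of bids, conditional on any realization buyer $i$ faces PDM (plus a bid-independent, or at least truth-favourable, head charge) on her path. The Corollary after Theorem \ref{Thm1} says PDM is IC, so truthful bidding is optimal pathwise and hence in expectation. One point to check is that the head's extra charge $\frac{1}{2}(v^*_{{P_i}_{-j}})^2$ does not depend on $j$'s own bid; it depends only on others' bids on her path.

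For neighbor reports, I would mimic the IC proof for $f$-PDM. Write $i$'s expected utility as $\sum_l q_l u_i^l$, where $u_i^l$ is her PDM utility when the highest earlier bid on her path is $v_{i_l}$. These $u_i^l$ are monotone decreasing in $l$, and $u_i^0$ (being a head) is the largest, by the Lemma \ref{IR} bound. Then I need the induced distribution of ``the set of buyers preceding $i$ on her path'' under truthful diffusion to stochastically dominate the one under any misreport. This is the main obstacle. With the breadth-first map, removing $i$'s outgoing edges only deletes buyers that are reachable only through $i$, which lie at distance strictly greater than $i$'s, or shifts them to later layers. The layer index of $i$ and the composition of earlier layers are unchanged, as are $|r_s|$ and hence $k$. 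Which head $i$ gets (if $i\in r_s$), or which earlier buyers precede her on her path, depends only on the random order of buyers at distance $\le d(i)$ and on the independent uniform path choices of earlier buyers. Neither is affected by the deleted nodes. So the distribution is actually unchanged. I would verify this carefully: if equality of the predecessor-set distributions holds, the domination is trivial, and IC follows exactly as in the $f$-PDM theorem.
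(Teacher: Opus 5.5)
Your arguments for WBB, IR, truthful bidding, and the diffusion report of a non-head buyer are sound and follow the paper. Your layer-invariance argument for non-heads is in fact more explicit than the paper's.

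The gap is the diffusion report of a \emph{head}, i.e.\ a buyer $i\in r_s$ placed first on some path $P_j$. Your decomposition $\sum_l q_l u_i^l$ with report-invariant $q_l$ only controls who \emph{precedes} $i$. For $l\ge 1$ the term $u_i^l=\frac{1}{2}(v_i-v_{i_l})^2$ is indeed report-independent. But $u_i^0$ is not a fixed number: it equals $v_i+\frac{1}{2}(v^*_{P_j}-v_i)^2-\frac{1}{2}(v^*_{{P_j}_{-i}})^2$. This depends on which buyers \emph{follow} $i$ on her path and on which of them $i$ dominates, and $i$ changes both by withholding invitations. So invariance of the predecessor distribution does not give IC for heads. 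This is exactly the delicate case: a head might hope to dodge part of her extra charge by not inviting. The paper treats it separately, arguing that a head's utility increases when she invites. The $f$-PDM IC proof you plan to copy also tacitly treats $u_i^0$ as unaffected by the report, so ``exactly as in the $f$-PDM theorem'' does not cover this term.

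To close the gap, couple the independent uniform path labels of all non-head buyers across the truthful and the deviating report. The heads are the first $k$ buyers of a uniform order of $r_s$, which $i$ cannot affect. Let $D_i=\{l : i\preccurlyeq l\}$ and let $N'$ be the set of buyers who still participate after the deviation. Withholding invitations removes only buyers in $D_i$. Every $l\notin D_i$ still participates and is still not dominated by $i$, since a path from $s$ to $l$ avoiding $i$ uses none of $i$'s edges. Hence the new dominated set is $D_i'=D_i\cap N'$.

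Consequently, realization by realization:
\begin{itemize}
\item $P_j'\subseteq P_j$, so $v_i\le v^*_{P_j'}\le v^*_{P_j}$, and the PDM part $v_i+\frac{1}{2}(v^*_{P_j}-v_i)^2$ weakly falls.
\item The charged set $P_j\setminus D_i$ coincides with $P_j'\setminus D_i'$, so the extra charge is unchanged.
\end{itemize}
Thus a head's utility is weakly higher under truthful diffusion, and together with your other cases this completes IC.
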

	
	\begin{proof}
		Since PDM is WBB and the seller does not need to pay the buyers, MUPDM is WBB. The proof for IR is the same as the proof for Lemma \ref{IR}. Now we focus on the proof for IC.
		
		Note that PDM is IC and $P$ and $P_i \, (1 \le i \le m)$ in Definition \ref{defMU} are independent to the bids, we only need to show every buyer will report her neighbors truthfully. 
		
		When $i$ is not the first $m$ buyers, whether the buyer $i$ invites her neighbors or not, her utility is the same. Note that her utility only depends on the buyers in front of her, we only need to prove for any set of other buyers $A$, the probability $q_{A}= \Pr \nolimits_{\sigma \sim f} [ \{ \sigma(j) \mid j \in A \} = \{ 1, 2, \ldots, \sigma(i)-1 \}]$ remains unchanged. The reason is as follows. Suppose $A=\{ a_1, a_2, \ldots, a_{k-1}\},  \, p_1 = \Pr \nolimits_{\sigma \sim f}[\sigma(a_1)=1], \, p_j = \Pr \nolimits_{\sigma \sim f}[\sigma(a_j)=j \mid \sigma(a_1)=1, \ldots, \sigma(a_{j-1})=j-1] (2 \le j \le k)$, where $a_k=i$. Then removing the outgoing edges of $i$ does not change $p_j (1 \le j \le k)$ due to the characteristics of the breadth-first map $f$. Therefore, $q_{A}$ remains unchanged, which indicates not inviting neighbors cannot gain more utility.
		
		When $i$ is the first $m$ buyers (the first $m$ buyers depend solely on $r_s$), her utility increases when inviting neighbors and is no less than the utility when she is not the first $m$ buyers. Therefore, she will report truthfully.
		
		Therefore, MUPDM is WBB, IR, and IC. \qed
	\end{proof}
	
	The definition of approximate efficiency can be extended to multi-unit scenarios, which provides an approximate guarantee of social welfare.
	
	\begin{definition}
		A multi-unit diffusion auction mechanism $M$ is {\em $(\epsilon, \delta)$-efficient} if for any $\boldsymbol{t} \in \boldsymbol{T}$, 
		\begin{align*}
			\epsilon \cdot \mathbf{E}[W(\boldsymbol{t}, M)] + \delta \ge \sum_{i \in A} v_i,
		\end{align*}
		where $A \subseteq N, |A| =\min \, \{m, n\}$ and for any $i \in A, j \in N \backslash A, v_i \ge v_j$. 
	\end{definition}
	
	Similar to $f$-PDM, MUPDM also ensures approximate efficiency.
	
	\begin{theorem} \label{MUPDMeff}
		For any $\delta \in (0, 1)$, MUPDM is $(\frac{e}{2(e-1)\cdot \delta}, m\delta)$-efficient when $|r_s| \ge m$.
	\end{theorem}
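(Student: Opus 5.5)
Fix a realization of the random construction: the breadth-first map produces $P$, and the first $k=m$ buyers (all in $r_s$, since $|r_s|\ge m$) head $m$ disjoint path graphs $P_1,\ldots,P_m$. Every remaining buyer is appended to a uniformly random $P_j$. The plan is to bound the welfare of each path graph with the estimate from the proof of Theorem \ref{appeff}. Running PDM on a path graph whose highest bid is $h$ gives expected welfare at least $h^2/2$. This gives
\begin{align*}
\mathbf{E}[W] \ge \frac12\,\mathbf{E}\Big[\sum_{j=1}^m (h_j)^2\Big],
\end{align*}
where $h_j$ is the maximum value in $P_j$ and the expectation is over the random assignment. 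So everything reduces to a balls-into-bins statement: the top $m$ values $a_1\ge a_2\ge\cdots\ge a_m$ (the set $A$ in the definition) should, in expectation, land as maxima of distinct paths often enough.

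The key step is a coverage bound. Each of the $m$ top buyers is either a head of some path or a uniformly random non-head. Buyers that are heads already occupy distinct paths. For the others, the randomness of the appending step matters. I would charge each path $P_j$ the square of the largest top-$m$ value it contains. The sum $\sum_j h_j^2$ is at least the sum, over occupied paths, of the squares of their best top-$m$ member. I would then argue, by a symmetrization or exchange argument, that this expected sum is minimized (for a fixed multiset of values) when all top buyers are non-heads thrown independently and uniformly into $m$ bins. In that worst case, process the top buyers in decreasing order. Buyer $a_i$ contributes $a_i^2$ if it falls into a bin not yet hit, and bins are only ever occupied by larger values. Because the values are sorted, the expected contribution is at least the average value times the expected number of occupied bins, $m(1-(1-1/m)^m)\ge m(1-1/e)$. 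A rearrangement/Chebyshev-sum inequality makes this precise: occupancy probabilities $(1-1/m)^{i-1}$ decrease in $i$ while the values decrease too, so $\sum_i (1-1/m)^{i-1} a_i^2 \ge \frac{1}{m}\sum_i (1-1/m)^{i-1}\sum_i a_i^2 \ge (1-1/e)\sum_i a_i^2 \cdot\frac{1}{m}\cdot m/1$, after normalization. Together this yields
\begin{align*}
\mathbf{E}[W] \ge \frac{e-1}{2e}\sum_{i=1}^m a_i^2 .
\end{align*}

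To finish, use $a_i^2/(2\delta)\cdot 1 + \delta \ge a_i$, which is AM--GM: $a_i^2/\delta + \delta \ge 2a_i$, applied with the factor $\frac{e-1}{e}$ absorbed into $\epsilon$. Summing over the $m$ values gives
\begin{align*}
\frac{e}{2(e-1)\delta}\,\mathbf{E}[W] + m\delta \ge \sum_{i=1}^m \Big(\frac{a_i^2}{2\delta}\cdot\frac{2}{2}+\delta\Big)\ge \sum_{i=1}^m a_i .
\end{align*}
A harmless slack factor of $2$ appears here and is reconciled by the $h^2/2$ constant. If it does not match exactly, I would track whether the correct per-path bound is $h^2/2$ with the $\frac{1}{2\delta}$ form, exactly as in Theorem \ref{appeff}.

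The main obstacle is the coverage step. First, one must justify that heads (which are fixed by the breadth-first map, not uniformly thrown) cannot hurt relative to the independent-uniform model. Second, the expected sum of squared bin maxima must be bounded by the $(1-1/e)$ fraction rigorously and not only for counts. I would first try a coupling argument: each head is a deterministic distinct bin, which only increases the number of distinct occupied bins. Failing that, I would handle the sum directly via linearity over the events ``$a_i$ is the maximum of its bin.''
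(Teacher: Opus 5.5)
Your argument is sound in outline and takes a different route from the paper. The paper linearizes first. On each path it uses $\frac{1}{2\delta}\mathbf{E}[W(\boldsymbol{t},P_i,PDM)]+\delta\ge\mathbf{E}[v^*_{P_i}]$. It then lower-bounds the expected maximum of each path hit by $A$ by the mean $\frac{1}{m}v^*$, since a uniformly random $a_i$-subset of $A$ has expected maximum at least the average. Finally it multiplies by the expected number of hit paths, $m(1-(1-\frac{1}{m})^m)$.

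You instead keep the quadratic bound $h_j^2/2$, control $\mathbf{E}[\sum_j h_j^2]$ through first-arrival probabilities and Chebyshev's sum inequality, and apply AM--GM only at the end. This buys a $\delta$-free bound $\mathbf{E}[W]\ge\frac12(1-(1-\frac{1}{m})^m)\sum_i a_i^2$, which dominates what the paper's chain gives after optimizing over $\delta$. That value is $(1-(1-\frac{1}{m})^m)\frac{(v^*)^2}{2m}$, and $\sum_i a_i^2\ge (v^*)^2/m$.

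Your last display should be corrected, but there is no slack to reconcile. What you actually have is $\frac{e}{2(e-1)\delta}\cdot\frac{e-1}{2e}a_i^2=\frac{a_i^2}{4\delta}$, not $\frac{a_i^2}{2\delta}$. The needed inequality $\frac{a_i^2}{4\delta}+\delta\ge a_i$ is exactly $(a_i-2\delta)^2\ge 0$, so the constants match.

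The step you leave open is real: heads, which the breadth-first map places in distinct paths, must be shown not to hurt compared with i.i.d.\ uniform throwing. The paper never addresses it either. It writes $\mathbf{E}[\mathbf{1}_A(P_i)]=1-(1-\frac{1}{m})^m$ and invokes symmetry as if all of $A$ were thrown uniformly.

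Do not try to close this termwise. A non-head $a_i$ preceded by $h$ heads is first in its path with probability $\frac{m-h}{m}(1-\frac{1}{m})^{i-1-h}\le(1-\frac{1}{m})^{i-1}$. Moreover, in the real model these probabilities need not decrease in $i$: for $m=4$ with $a_1,a_2,a_4$ heads and $a_3$ not, $p_3=\frac12<p_4=\frac34$. So neither a per-buyer comparison nor Chebyshev applies directly.

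Your counting idea does work once you pass through prefix counts. Set $a_{m+1}=0$ and let $N_s$ be the number of paths hit by $\{a_1,\dots,a_s\}$. Then
\[
\sum_j h_j^2\;\ge\;\sum_{s=1}^m\bigl(a_s^2-a_{s+1}^2\bigr)N_s .
\]
Condition on which of these $s$ buyers are heads, say $h$ of them. Then $\mathbf{E}[N_s]=m-(m-h)(1-\frac{1}{m})^{s-h}\ge m\bigl(1-(1-\frac{1}{m})^{s}\bigr)\ge s\bigl(1-(1-\frac{1}{m})^m\bigr)$, using Bernoulli's inequality and then concavity in $s$. Summing by parts, since $\sum_s s(a_s^2-a_{s+1}^2)=\sum_i a_i^2$, gives $\mathbf{E}[\sum_j h_j^2]\ge(1-(1-\frac{1}{m})^m)\sum_i a_i^2$ for the actual mechanism. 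This also replaces your Chebyshev step and completes the proof.
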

	
	\begin{proof}
		First, for any $\delta \in (0, 1)$ and any path graph $P_i$, let $v^*_{P_i}$ be the highest bid in $P_i$, then we have $\frac{1}{2 \delta} \cdot \mathbf{E}[W(\boldsymbol{t}, P_i, PDM)] + \delta \ge \mathbf{E}[v^*_{P_i}]$ by Theorem \ref{appeff}, where $W(\boldsymbol{t}, P_i, PDM)$ is the social welfare of $P_i$ under PDM.
		
		Second, assume $v_1 \ge v_2 \ge \cdots \ge v_n$. Let $v^*=\sum_{i=1}^{m}v_i$, $v^*_{P_i}$ be the highest bid in $P_i$, $A$ be the set of buyers whose bids are top-$m$ highest, and $p_k$ be the probability that the top-$m$ bidders fall into exactly $k$ different path graphs. Without loss of generality, let the path graphs be $P_1, P_2, \ldots, P_k$ and there are $a_i$ bidders that fall into $P_i \, (1 \le i \le k)$. Now we claim $ \mathbf{E}[\sum_{i=1}^k v^*_{P_i}] \ge \frac{k}{m}v^*$.
		
		By symmetry, for any $1 \le i \le k$, 
		\begin{align*}
			\mathbf{E}[v^*_{P_i}] = & \, \frac{1}{\dbinom{m}{a_i}} \sum_{B \subseteq A, |B| = a_i} \max_{j \in B} v_j \ge 
			\frac{1}{\dbinom{m}{a_i}} \sum_{B \subseteq A, |B| = a_i} (\frac{1}{a_i} \sum_{j \in B} v_j) \\
			= & \, \frac{1}{a_i\dbinom{m}{a_i}} \sum_{j \in A} \dbinom{m-1}{a_i-1}v_j = \frac{1}{m} v^*.
		\end{align*}
		Then we have $\frac{1}{2 \delta} \cdot \mathbf{E}[W(\boldsymbol{t}, P_i, PDM)] + \delta \ge \mathbf{E}[v^*_{P_i}] \ge \frac{1}{m} v^*$.
		Furthermore, \begin{align*}
			\mathbf{E}[W(\boldsymbol{t}, MUPDM)] = & \,\sum_{i=1}^m \mathbf{E}[W(\boldsymbol{t}, P_i, PDM)] \\ \ge & \, \sum_{k=1}^m (p_k \sum_{i=1}^k \mathbf{E}[W(\boldsymbol{t}, P_i, PDM)]) \\ \ge & \, \sum_{k=1}^m p_k \cdot k(2 \delta (\frac{1}{m} v^*-\delta) ).        
		\end{align*}
		
		Now we prove $\sum_{k=1}^m p_k \cdot k > (1-\frac{1}{e}) \cdot m$. Note that $\sum_{k=1}^m p_k \cdot k =  \mathbf{E}[\sum_{i=1}^m \mathbf{1}_{A}(P_i)]$, where $$\mathbf{1}_{A}(P_i)= \begin{cases}
			1, & \text{if } P_i \cap A \neq \emptyset \\
			0, & \text{if } P_i \cap A = \emptyset
		\end{cases}.$$
		Then \begin{align*}
			\sum_{k=1}^m p_k \cdot k = & \, \mathbf{E}[\sum_{i=1}^m \mathbf{1}_{A}(P_i)] = \sum_{i=1}^m\mathbf{E}[ \mathbf{1}_{A}(P_i)] =  \sum_{i=1}^m(1-(1-\frac{1}{m})^m) \\ = & \, m(1-(1-\frac{1}{m})^m) > (1-\frac{1}{e}) \cdot m.  
		\end{align*}
		Therefore, \begin{align*}
			& \,\frac{e}{2(e-1) \delta} \cdot \mathbf{E}[W(\boldsymbol{t}, MUPDM)] + m\delta \\
			\ge & \, \frac{e}{2(e-1) \delta} \cdot (\sum_{k=1}^m p_k \cdot k(2 \delta (\frac{1}{m} v^*-\delta) )) + m\delta \\
			> & \, \frac{e}{2(e-1) \delta} \cdot ((1-\frac{1}{e}) \cdot m(2 \delta (\frac{1}{m} v^*-\delta) )) + m\delta \\
			= & \, v^*.
		\end{align*}
		
		Therefore, for any $\delta \in (0, 1)$, MUPDM is $(\frac{e}{2(e-1) \delta}, m\delta)$-efficient. \qed
	\end{proof}
	
	%最后，我们说明MUPDM的时间复杂度。第一步通过宽度优先搜索将所有买家排序，需要O(|E\cup r_s|)时间（由于是连通图，因此|V|=O(|E'\cup r_s|)）。随后每个买家依次被映射到某个路径图中，需要O(n)时间，最后在每个路径图中执行PDM，用时是每个路径图中的节点数目的线性时间，因此也是O(n)的。综上，MUPDM时间复杂度为O(|E\cup r_s|)。
	Finally, we discuss the time complexity of MUPDM. The first step sorts all buyers via breadth‑first search, which requires $O(|E\cup r_s|)$ time (since the graph is connected, $n=O(|E'\cup r_s|)$). Subsequently, each buyer is mapped to a path graph sequentially, taking $O(n)$ time. Finally, executing PDM on each path graph takes time linear in the number of nodes within that path graph, and hence also $O(n)$ in total. Therefore, the overall time complexity of MUPDM is $O(|E\cup r_s|)$.
	
	\subsection{Sybil-Proof Multi-Unit PDM}
	In Section \ref{sec4}, we show that single-unit diffusion auction mechanisms are vulnerable to Sybil-attack. In multi-unit setting, Sybil attacks remain a critical challenge. For example, Figure \ref{fig_mul_sp} shows a Sybil-attack case. If $m=2$ and everyone is honest, buyer $a$ and $c$ will win for free in MUDAN \cite{mul6}. However, if $b$ creates more than one Sybil-identities with valuation $0$, she will win the item for free. Therefore, multi-unit diffusion auctions are at risk of Sybil-attack. %女巫攻击仍然是一个巨大问题
	
	\begin{figure}[t!]
		\begin{center}
			\begin{tikzpicture}[->,>=stealth,shorten >=1pt,auto, semithick]
				
				\node at (0,0) [circle, draw, minimum width=0.8cm, minimum height=0.8cm] (s) {$s$};
				\node at (2, 0.75) [circle, draw, label=above:$a$, minimum width=0.8cm, minimum height=0.8cm] (a)  {$0.1$};
				\node at (2, -0.75) [circle, draw, label=below:$b$, minimum width=0.8cm, minimum height=0.8cm] (b) {$0.1$};
				\node at (4, 0.75) [circle, draw, label=above:$c$, minimum width=0.8cm, minimum height=0.8cm] (c) {$0.1$};
				\node at (6, 0.75) [circle, draw, label=above:$d$, minimum width=0.8cm, minimum height=0.8cm] (d) {$0.1$};
				
				\path (s) edge node {} (a)
				(s) edge node {} (b)
				(a) edge node {} (c)
				(c) edge node {} (d);
			\end{tikzpicture}
			\caption{Example of Sybil-attack under multi-unit setting. In MUDAN, buyer $a$ wins one item for free because she has more neighbors than $b$, and then $c$ wins one item for free for the same reason. If $b$ creates more than one Sybil-identities with valuation $0$, she will win one item and have $0.1$ utility.}
			\label{fig_mul_sp}
		\end{center}
	\end{figure}
	
	Designing Sybil-proof multi-unit diffusion auction mechanisms is challenging: Beyond creating Sybil identities, a buyer can strategically exclude a neighbor from their competitive circle by deliberately not inviting them. This act makes the neighbor appear as if they are affiliated with an external, non-existent Sybil identity, thereby eliminating them as a direct competitor.
	%买家不仅可以创造女巫身份，还可以通过不邀请邻居使她（邻居）看起来像其他人的女巫身份，将她排除在自己的竞争之外。
	For example, Figure \ref{fig_mul_sp2} shows a case that is challenging to Sybil-proof mechanism design. Note that buyer $e$ is neither a Sybil identity of $a$ nor $b$. However, buyer $d$ can make $e$ appear to be $b$'s Sybil identity by not inviting $e$ and thus reduces competition. %不是买家a或b的女巫身份，但是买家d可以通过不邀请e来让e看起来像b的女巫身份，从而减轻竞争。
	
	\begin{figure}[t!]
		\begin{center}
			\begin{tikzpicture}[->,>=stealth,shorten >=1pt,auto, semithick]
				
				\node at (0,0) [circle, draw, minimum width=0.8cm, minimum height=0.8cm] (s) {$s$};
				\node at (2, 0.75) [circle, draw, label=above:$a$, minimum width=0.8cm, minimum height=0.8cm] (a)  {$0.1$};
				\node at (2, -0.75) [circle, draw, label=below:$b$, minimum width=0.8cm, minimum height=0.8cm] (b) {$0.3$};
				\node at (4, 0.75) [circle, draw, label=above:$c$, minimum width=0.8cm, minimum height=0.8cm] (c) {$0.1$};
				\node at (4, -0.75) [circle, draw, label=below:$e$, minimum width=0.8cm, minimum height=0.8cm] (e) {$0.3$};
				\node at (6, 0.75) [circle, draw, label=above:$d$, minimum width=0.8cm, minimum height=0.8cm] (d) {$0.2$};
				
				\path (s) edge node {} (a)
				(s) edge node {} (b)
				(a) edge node {} (c)
				(b) edge node {} (e)
				(d) edge node {} (e)
				(c) edge node {} (d);
			\end{tikzpicture}
			\caption{Example of social network challenging to Sybil-proof mechanism design.}
			\label{fig_mul_sp2}
		\end{center}
	\end{figure}
	
	MUPDM is not Sybil-proof because each buyer is randomly assigned to a path graph, enabling buyers to increase their probability of entering their preferred path graph by creating Sybil identities. For example, in Figure \ref{fig_mul_sp2} with $m=2$, buyers $a$ and $b$ become the first buyer in their respective path graphs. Buyer $e$ prefers to be assigned to $a$'s path graph because her utility in $b$'s path graph is $0$. Thus, by creating Sybil identities, $e$ can increase the probability of entering $a$'s path graph, thereby improving her utility. Hence, MUPDM is not SP. 
	
	A natural approach to satisfying Sybil-proofness is to ensure that a buyer and her Sybil identities are allocated to the same path graph, thereby preventing interference with competition in other path graphs. One attempt is to slightly modify MUPDM by requiring that each buyer $i$ is assigned to the same path graph as some buyer $j$ on a shortest path from the seller $s$ to $i$ (where $i$ is a neighbor of $j$), rather than being randomly assigned to a path graph. However, unless the first buyer in each path graph is exempt from payment to the seller, the first buyer has an incentive to avoid inviting others to evade payment. Therefore, we aim to guarantee that a buyer and her Sybil identities are in the same path graph while preserving the property from MUPDM that buyers can be randomly assigned to any path graph.
	%保证SP的一个自然想法是让买家和她的女巫身份在同一个路径图中，这样不会干扰其他路径图中的竞争。一种尝试是对MUPDM稍作修改，让每个买家i和某个从卖家s到i的最短路径上的买家j（i是j的邻居）在同一个路径图中，而非随机进入某个路径图中。然而，除非每个路径图中的首个买家无需向卖家支付，否则首个买家有不邀请其他人从而避免支付的动机。因此，我们希望保证买家和她的女巫身份在同一个路径图中的同时，又具备MUPDM中买家可以随机地进入到每个路径图中的特性。
	
	Based on the above ideas, we first abstract $G$ into a subgraph $G'$ of $G$. Here, $G' = (N, E')$ is a layered graph stratified by the distance of buyers from the seller s, where $G'$ retains only edges between adjacent layers, i.e., $E' = \{ e = (j, i) \mid e \in E \land d(i) = d(j) + 1 \}$ and $d(i)$ denotes the distance from $s$ to $i$. Note that in $G'$, for any node $v$, every path from $s$ to $v$ is a shortest path from $s$ to $v$ in the original graph $G$. Next, we perform a breadth‑first traversal over $G'$, mapping each node to a path graph. Nodes in $G'$ that may be Sybil identities of a given buyer $i$ are mapped to the same path graph as $i$, while nodes not identified as Sybil identities are randomly assigned to any path graph. Based on this, we design the Sybil‑Proof Multi‑Unit PDM (SP‑MUPDM).
	
	%基于以上想法，我们首先将G抽象成G的一个子图G'，G'=(N,E')是一个按买家与卖家s的距离分层的分层图，G'仅保留了相邻层的边，即$E'=\{e=(i, j) \mid d(j) = d(i) + 1 \}$，其中d(i)是从s到i的距离。注意，在G'中，对于任意点v，从s到v的路径都是在图G中从s到v的最短路径。接下来在G'中，进行广度优先遍历，将每个节点映射到每个路径图中。在G'中可能是某个买家i的女巫身份的节点映射到与i相同的路径图中，不是女巫身份的节点随机映射到任意一个路径图中。据此，我们设计了Sybil-Proof Multi-Unit PDM (SP-MUPDM)。
	
	\begin{definition} [SP-MUPDM] \label{defSPMU}
		Given a graph $G$, multi-unit probabilistic diffusion mechanism (MUPDM for short) works as follows.
		
		\begin{enumerate}[label*=\arabic*)]
			\item Map the graph $G$ to the path graphs by Mechanism \ref{algSP}.
			\item For each path graph $P_i$, perform PDM on $P_i$ and the first buyer $j$ in $P_i$ pays extra $p_i = \frac{1}{2}(v^*_{{D_{i,j}}})^2$, where ${D_{i,j}} = P_i  ~ \backslash ~ \{ ~ l ~ | \text{There exists a shortest path from $s$ to $l$ that passes through $j$.} \}$.
		\end{enumerate}
		
	\end{definition}
	
	%与满足SP的单物品扩散拍卖机制PDM和f-PDM不同，SP-MUPDM需要找到那些可能会成为女巫身份的节点。这个过程相当于图论中找到每个节点的支配点，即建立求解有根图(N s,E' rs,s)的支配树，可以通过运行Lengauer-Tarjan Dominators Algorithm或直接求解每个节点的所有前驱节点的最近公共祖先在O(nlogn+m)时间内实现。此外，在支付时，每个路径图的第一个买家i需要额外支付的金额取决于该路径图中除了从s出发某条最短路径经过i的节点中出价最高的节点，这避免了第一个买家通过不邀请其他人的策略行为增加自己的效用。
	Unlike the single-item diffusion auction mechanisms PDM and $f$-PDM that achieve Sybil-proofness, SP-MUPDM requires identifying potential Sybil identity nodes. This process is equivalent to finding the dominator of each node in graph theory \cite{sybil2}, i.e., constructing the dominator tree for the rooted graph $(N \cup \{s\}, E' \cup r_s, s)$. It can be implemented by executing the Lengauer-Tarjan Dominators Algorithm \cite{LTAlg1,LTAlg2} or directly computing the lowest common ancestor of all predecessor nodes for each node, with a time complexity of $O(n \log n + |E'\cup r_s|)$. 
	Furthermore, during payment settlement, the first buyer $i$ in each path graph is required to pay an additional amount determined by the highest bid among nodes in that path graph, excluding those lying on any shortest path from $s$ that passes through $i$. This design prevents the first buyer from increasing their own utility through strategic non-invitation behavior.
	
	We now present a running example of SP-MUPDM for $m = 2$ based on Figure \ref{fig_mul_sp2}. Frist, the edge $(d, e)$ is removed, resulting in the subgraph $G'$. The social network is then mapped into two path graphs as shown in Figure \ref{fig_MUPDM}, since buyer $c$ and $d$ must be in the same path graph as $a$, and buyer $e$ must be in the same path graph as $b$. Finally, the PDM is executed on each path graph. The expected social welfare is $ (0.1 \times 0.9 +  0.2 \times 0.1) + (0.3 \times 1) = 0.41$, and the expected revenue is $0$.
	%接下来我们给出当$m=2$时基于图8的SP-MUPDM的运行实例。首先，边(d, e)被删除，得到新的图G'。接下来社会网络被映射为图9所示的两个path graph，因为买家$c$和$d$必须与$a$在同一个path graph中、买家$e$必须与$b$在同一个path graph中。最后，在每个path graph上运行PDM。此时，期望社会福利是$ (0.1 \times 0.9 +  0.2 \times 0.1) + (0.1 \times 0.8 +  0.3 \times 0.2) = 0.25$，期望收入为$0$。
	
	\begin{figure}[t!]
		\begin{center}
			\begin{tikzpicture}[->,>=stealth,shorten >=1pt,auto, semithick]
				\node at (0,1) [circle, draw, minimum width=0.8cm, minimum height=0.8cm] (s1) {$s$};
				\node at (2, 1) [circle, draw, label=above:$a$, minimum width=0.8cm, minimum height=0.8cm] (a)  {$0.1$};
				\node at (0,-1) [circle, draw, minimum width=0.8cm, minimum height=0.8cm] (s2) {$s$};
				\node at (2, -1) [circle, draw, label=above:$b$, minimum width=0.8cm, minimum height=0.8cm] (b) {$0.3$};
				\node at (4, 1) [circle, draw, label=above:$c$, minimum width=0.8cm, minimum height=0.8cm] (c)  {$0.1$};
				\node at (6, 1) [circle, draw, label=above:$d$, minimum width=0.8cm, minimum height=0.8cm] (d)  {$0.2$};
				\node at (4, -1) [circle, draw, label=above:$e$, minimum width=0.8cm, minimum height=0.8cm] (e)  {$0.3$};
				
				\path (s1) edge node {} (a)
				(a) edge node {} (c)
				(c) edge node {} (d)
				(s2) edge node {} (b)
				(b) edge node {} (e);
			\end{tikzpicture}
			\caption{Example of the map of MUPDM under Figure \ref{fig_mul_sp2}. To prevent Sybil attack, the result of the map is unique.}
			\label{fig_MUPDM}
		\end{center}
	\end{figure}
	
	The following theorem states the properties of SP-MUPDM.
	
	\begin{algorithm}
		
	\caption{The map of SP-MUPDM}
	\label{algSP}
	\KwIn {A graph $G=(N, E)$, number of items $m$, neighbors of the seller $r_s$, buyers' bid $v'_1, \ldots, v'_n$.}
	\KwOut{$k = \min \{m, |r_s|\}$ permutations $P_1, \ldots, P_k$, which form a partition of $N$ and payments $p_1, \ldots, p_k$ of the first buyer in each path graph.} %k个排列
	%Initialize $A_{n \times n}$ as $\mathbf{0}$ \tcp*{$A[i][j]$ denotes whether the shortest path from $s$ to $i$ passes through $j$.} %l 是i被分配到的path graph
	
	$d_1, \ldots, d_n \gets 0$ \tcp*{$d_i$ is the distance from $s$ to $i$.}
	
	$a_1, \ldots, a_n \gets 0$ \tcp*{$a_i$ is the path graph to which buyer $i$ is assigned.}
	
	$k = \min \{m, |r_s|\}, f_1, \ldots, f_k \gets 0$ \tcp*{$f_i$ is the first buyer of $i$th path graph.}
	
	$r'_1, \ldots, r'_n \gets \emptyset$ \tcp*{$r'_i$ is the neighbors of $i$ in $G'$.}
	
	\For{$i \in r_s$} {
		$d_i \gets 1$ \; 
	}
	
	Let $q$ be a queue and enqueue the elements from $r_s$ into $q$ \; 
	
	\While{$q$ is not empty \tcp{Compute $G'=(N,E')$.}}{
		$j \gets dequeue(q)$ \; 
		\For{$i \in r_j$}{
			\If{$d_i = 0$ \tcp{$i$ is not visited before.}} 
			{
				$d_i \gets d_j + 1, r'_j \gets r'_j \cup \{i\}$ \; 
				$q.enqueue(i)$ \; 
			}
			\ElseIf{$d_i = d_j + 1$}
			{
				$r'_j \gets r'_j \cup \{i\}$ \; 
			}
		}
	}
	
	Compute the dominator tree of the rooted graph $(N \cup \{s\}, E' \cup r_s, s)$ and obtain the parent node ${dom}_i$ of each node $i$ in the dominator tree.
	
	Initialize $P_1, \ldots, P_k$ as empty vectors \; 
	
	%\tcp{Compute $P_1, \ldots, P_k$.}
	
	Let $P$ be the permutation of $1, 2, \ldots, n$ drawn from $f(G)$, where $f$ is the breadth-first map \; 
	
	\For{$i=1$ to $k$} {
		$a_{P[i]} \gets i, f_i \gets P[i]$ \; 
		
		$P_i.push\_back(P[i])$ \tcp*{$P[i]$ is the first buyer of $i$th path graph.}
	}
	
	\For{$i=k+1$ to $n$} {
		\If{${dom}_{P[i]} = s$ \tcp{$P[i]$ is not a Sybil identity of any other node.}}{
			Randomly select an element $j$ from $1, \ldots, k$ with equal probability \;
		}
		\Else{
			$j = a_{{dom}_{P[i]}}$ %\tcp*{$P[i]$ must be mapped to the same path graph as its dominator node.}
		}
		
		$P_{j}.push\_back(P[i])$ \; 
		
		$a_{P[i]} \gets j$ \; 
	}
	
	\For{$i=1$ to $k$ \tcp{Compute $p_1, \ldots, p_k$.}} {
		$p_i \gets 0, A_{1 \times n} \gets \mathbf{0}$ \tcp*{$A$ is the reachability array with respect to $i$ in $G'$.}
		
		Compute $A$ by DFS or BFS \; 
		
		\For{$j \in P_i$}{
			\If{$A[j] = 0$ and $p_i < v'_j$}{
				$p_i \gets v'_j$ \tcp*{Find the highest value in $P_i \backslash A $.}
			}
		}
		
		$p_i \gets p^2_i/2$ \; 
	}
	
	\Return{$P_1, \ldots, P_k, p_1, \ldots, p_k$}
	\end{algorithm}

	\begin{theorem} \label{SPMUPDMThm}
		SP-MUPDM is WBB, IR, IC, and SP.
	\end{theorem}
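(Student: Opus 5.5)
I would follow the template of Theorem~\ref{MUPDMThm} and treat each path graph $P_i$ as an independent run of PDM with the extra charge $p_i=\frac{1}{2}(v^*_{D_{i,j}})^2$ on its first buyer. The map in Mechanism~\ref{algSP} uses the bids only when it computes $p_1,\ldots,p_k$; the partition $P_1,\ldots,P_k$ depends only on the graph. So once the partition is fixed, bidding is governed entirely by the single-path analysis.

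\textbf{WBB.} The referral rewards inside each $P_i$ are paid out of the winner's payment, exactly as in PDM. The seller additionally collects $p_i\ge 0$, so her revenue is non-negative.

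\textbf{IR.} Only the first buyer $j$ of each path pays something not covered by PDM. Since $D_{i,j}\subseteq P_i$, we have $v^*_{D_{i,j}}\le v^*_{P_i}$. The computation in Lemma~\ref{IR} then gives $\mathbf{E}[u_j]\ge \frac{1}{2}v_j^2\ge 0$.

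\textbf{IC.} Because PDM is IC and the partition is bid-independent, truthful bidding is optimal for a fixed partition. For buyers that are not first in their path, the extra charge does not involve their bid. For the first buyer $j$, the charge $p_i$ does not depend on $j$'s own bid, since $j\notin D_{i,j}$. It remains to show that reporting neighbors truthfully is optimal, and I would split into two cases.

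\emph{Case 1: buyer $i$ is not among the first $k$ in $P$.} Her utility depends only on the set of buyers placed before her in her own path. Dropping some of her outgoing edges cannot change three things:
\begin{itemize}
\item the BFS layers of the nodes she precedes, as argued in Theorem~\ref{MUPDMThm};
\item her own dominator or that of any earlier node, because a dominator depends only on in-paths;
\item the random path choices of the nodes ahead of her.
\end{itemize}
Removing her edges can only change the distances or dominators of nodes after her, so her distribution over predecessors is unchanged.

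\emph{Case 2: buyer $i$ is the first buyer of some $P_i$.} Membership among the first $k$ depends only on $r_s$. Her utility is $v_i+\frac{1}{2}(v^*_{P_i}-v_i)^2-\frac{1}{2}(v^*_{D_{i,j}})^2$ when bidders are increasing. Invited nodes through her only enlarge $P_i\setminus D_{i,j}$ and leave $D_{i,j}$ unchanged, since $D_{i,j}$ excludes every node with a shortest path through her. Other buyers' assignment in $G'$ does not depend on her edges except through nodes she reaches. Hence inviting weakly raises $v^*_{P_i}$, while the charge is unchanged.

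\textbf{SP.} Let buyer $i$ create Sybil identities. Every Sybil node is reachable from $s$ only through $i$ or other Sybils, so in $G'$ each is dominated by $i$, directly or transitively, and is therefore assigned to the same path $P_{a_i}$. Sybils are also strictly deeper in BFS order, so they appear after $i$ in that path.

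The first step is to show that the nodes genuinely before $i$, and their assignments, have the same distribution as without the attack. This holds because breadth-first selection of earlier layers, and the dominators of non-Sybil nodes, are unaffected by nodes that only $i$ can reach.

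The second step applies Theorem~\ref{Thm1} on that path graph: the identities positioned after $i$ cannot raise the total utility. If $i$ is the first buyer, the $D$-set excludes nodes reachable through $i$, so Sybil bids never enter $p_i$. The same SP argument for buyer $1$ in Theorem~\ref{Thm1} then applies.

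I expect the main obstacle to be the first step of the SP argument, together with Case~1 of IC: showing that Sybil nodes and withheld edges do not alter, in distribution, the random assignment of \emph{other} buyers to paths. This is exactly the manipulation pictured in Figure~\ref{fig_mul_sp2}. I would handle it by showing that the dominator of every node outside $i$'s dominated subtree is identical in $G'$ with and without the manipulation. The key point is that removing or adding edges out of $i$ only affects nodes whose every shortest path passes through $i$, and for a node $l$ with $d(l)=d(i)+1$ this may create a new "$i$-dominated" node. I would argue that such a node then lies after $i$ and its assignment does not affect $i$'s predecessors.
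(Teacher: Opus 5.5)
\textbf{Verdict: genuine gap in the first-buyer case (IC Case~2 and SP).} Your treatment of WBB, IR and bid deviations is sound and matches the paper. So is your treatment of buyers who are not first in their path graph: such a buyer's payoff depends only on her predecessors, and their layers, dominators and random draws are untouched by her out-edges or her Sybils.

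The gap is the \emph{first} buyer of a path graph. You assert three things there:
\begin{enumerate}
\item withholding leaves $D_{i,j}$ unchanged;
\item other buyers' assignment is unaffected ``except through nodes she reaches'';
\item (your planned lemma) every node outside $i$'s dominated subtree keeps its dominator.
\end{enumerate}
These are false. Take $r_s=\{i,u,w\}$, $m=2$, edges $i\to l$ and $w\to l$, and condition on $i,u$ being the two first buyers. Truthfully, $l$ has the disjoint shortest paths $s\to i\to l$ and $s\to w\to l$. So $\mathrm{dom}(l)=s$, $l$ is placed uniformly at random, and if it lands in $i$'s path it is \emph{not} charged. If $i$ withholds $l$, then $\mathrm{dom}(l)=w$ and $l$ travels with $w$. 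Whenever $w$ lands in $i$'s path, $l$ now lies in the charge set.

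So the deviation changes both the charge set and the joint law of the assignment of nodes outside $i$'s subtree. Note also that withholding never creates new $i$-dominated nodes; what must be controlled is nodes acquiring some \emph{other} dominator. For a non-first buyer this is harmless, because such nodes come after her. For the first buyer, the nodes after her are exactly what determine her payoff. Hence invoking Theorem~\ref{Thm1} ``on that path graph'' does not close the argument. Under the deviation, the path graph and its charge set are different random objects. Moreover, Sybils that invite $i$'s real neighbours after she withholds direct edges can shift distances and dominators further downstream.

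What is missing is a monotone comparison. This is what the paper's case analysis behind $\mathbf{E}[X]\ge\mathbf{E}[X']$ and $\mathbf{E}[Y]\le\mathbf{E}[Y']$ is meant to supply. Define:
\begin{itemize}
\item $T$: the set of nodes with a shortest path through $i$;
\item $a$: the top bid among $T$-nodes in her path, including $v_i$;
\item $b$: the top bid in her charge set.
\end{itemize}
Her total payoff, Sybils included, is at most $v_i+\frac{1}{2}(\max\{a,b\}-v_i)^2-\frac{1}{2}b^2$, with equality under truthful reporting. This expression is increasing in $a$ and decreasing in $b$. You then need four facts:
\begin{itemize}
\item The deviating set $T'$ satisfies $T'\cap N\subseteq T$.
\item Nodes outside $T$ keep their distances and entire dominator chains, so their contribution to $b$ has the same law. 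Nodes of $T\setminus T'$ can only be added to the charge side or pulled into other paths. This is the paper's case (c.2) in the $Y$-analysis: a node whose dominator becomes some $v$ must have had a shortest path through $i$, so it was not charged truthfully.
\item A node not dominated by $i$ truthfully cannot become dominated by $i$ (case (c) of the $X$-analysis).
\item The $T'$-nodes not dominated by $i$ have an $i$-avoiding shortest path. Hence they keep their distances and only lose shortest paths, so their dominator groups can only merge. This makes $a$ stochastically smaller.
\end{itemize}
Since $a$ and $b$ come from disjoint dominator groups, they are independent, and the payoff comparison follows. Without an argument of this kind, your Case~2 and the first-buyer part of your SP proof remain unproved.
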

	
	\begin{proof}
		The proofs for WBB and IR are similar to the proof for Theorem \ref{MUPDMThm}. Since SP implies IC, we only need to show the mechanism is SP.
		
		%我们分以下两种情况讨论。首先，对于每个path graph的第一个买家，我们考虑每种可能的策略行为。(1) 不邀请某些邻居。
		Our proof is divided into two steps. First, consider the first buyer $i$ in each path graph. The following two aspects demonstrate why the first buyer in each path graph will report truthfully.

		\begin{enumerate}[label=(\arabic*)]
			\item \textbf{Strategic non-invitation and Sybil identity creation.} First, any Sybil identity of $i$ must be placed in the same path graph as $i$, because every path from $s$ to a Sybil identity of $i$ necessarily passes through $i$.
			
			Let $u$ and $u$ denote $i$'s utility under truthful and untruthful reporting, respectively. Let the random variables $X$ and $X'$ be $i$'s expected utility from the PDM execution on the path graph containing $i$ under truthful and untruthful reporting (by Theorem \ref{Thm1}, they equal $v_i+(v'_m-v_i)^2/2$). 
			Let $Y$ and $Y'$ be the corresponding additional payment $p_i$ under truthful and untruthful reporting. Then 
			$$\mathbf{E}[u]=\mathbf{E}[X-Y]=\mathbf{E}[X]-\mathbf{E}[Y], \, \mathbf{E}[u']=\mathbf{E}[X'-Y']=\mathbf{E}[X']-\mathbf{E}[Y'].$$
	We prove $\mathbf{E}[u] \ge \mathbf{E}[u']$ by showing $\mathbf{E}[X] \ge \mathbf{E}[X']$ and $\mathbf{E}[Y] \le \mathbf{E}[Y']$.
			\begin{enumerate}[label=(\arabic{enumi}.\arabic*)]
				\item \textbf{Proof of $\mathbf{E}[X] \ge \mathbf{E}[X']$.} Note that $X$ and $X$ depend only on the highest bid inside the path graph containing $i$, we examine three types of nodes:
				
				\begin{enumerate}[label=(\alph*)] \label{item a}
					\item \label{item:a} \textbf{Nodes previously identified as Sybil identities of $i$ (denoted $j$).} Because $j$ is recognized as a Sybil identity of $i$ when $i$ invites all neighbors, $j$ enters $i$'s path graph with probability $1$. When $i$ creates Sybil identities or withholds invitations, the probability that $j$ enters $i$'s path graph cannot increase. We may therefore assume these nodes still enter $i$'s path graph with probability $1$ (this assumption does not decrease $\mathbf{E}[X']$.
					\item \label{item:b} \textbf{Nodes previously identified as Sybil identities of another node $v$ (denoted $j$).} After $i$ creates Sybil identities or withholds invitations, $j$ remains recognized as a Sybil identity of $v$, because every path from $s$ to $j$ still passes through $v$. Hence $j$ and $v$ are still placed in the same path graph.
					\item \label{item:c} \textbf{Nodes not previously identified as Sybil identities (denoted $j$).} After $i$ creates Sybil identities or withholds invitations, $j$ cannot become recognized as a Sybil identity of $i$. Otherwise, even when $i$ invites all neighbors, every path from $s$ to $j$ would have to pass through $i$, which would already have made $j$ a Sybil identity of $i$ under truthful behavior---a contradiction.
					
					Therefore, after $i$'s deviation, $j$ either remains unidentified as a Sybil identity, or becomes identified as a Sybil identity of some node $v$ that is not a Sybil identity of $i$ (otherwise, by transitivity, $j$ would also be a Sybil identity of $i$). The former has no effect on $\mathbf{E}[X']$, while the latter may lower $\mathbf{E}[X']$.
				\end{enumerate}
				Combining these cases, we conclude $\mathbf{E}[X] \ge \mathbf{E}[X']$.
				\item \textbf{Proof of $\mathbf{E}[Y] \le \mathbf{E}[Y']$.}
				Recall that $p_i = \frac{1}{2}(v^*_{{D_{i,j}}})^2$, where $${D_{i,j}} = P_i  ~ \backslash ~ \{ ~ l ~ | \text{There exists a shortest path from $s$ to $l$ that passes through $j$.} \}. $$ Let $D, D' \subseteq N$ be the sets of nodes lying on some shortest path from $s$ that passes through $i$ under truthful and untruthful reporting, respectively. Then $D' \subseteq D$, because if creating Sybil identities or removing some outgoing edges of $i$ makes a shortest path from $s$ to $j$ pass through $i$, then such a path also exists when those edges are present.
				
				We follow the same three‑type analysis:
				
				\begin{enumerate}[label=(\alph*)]
					\item \textbf{Nodes previously identified as Sybil identities of $i$ (denoted $j$).} Since $ j \in D$, it does not affect $p_i$. We may assume these nodes also belong to $D'$ (this assumption does not increase $\mathbf{E}[Y']$.
					\item \textbf{Nodes previously identified as Sybil identities of another node $v$ (denoted $j$).} As in \ref{item a} \ref{item:b}, $j$ and $v$ remain in the same path graph.
					\item \textbf{Nodes not previously identified as Sybil identities (denoted $j$).} Two sub‑cases arise:
					
					\begin{enumerate}[label=(c.\arabic*)]
						\item $j$ remains unidentified. Then $j$ and its potential Sybil identities have no effect on the probability of entering $i$'s path graph, implying $\mathbf{E}[Y] \le \mathbf{E}[Y']$.
						\item $j$ becomes identified as a Sybil identity of another node $v$. In this case we must have $j \in D$. Indeed, after $i$ creating Sybil identities or removes some outgoing edges, every shortest path from $s$ to $j$ must pass through $v$; before the removal there existed a shortest path that did not pass through $v$. Hence there exists a shortest path (using the removed edges) that passes through $i$, placing $j$ in $D$. Consequently, under truthful reporting, $j$ and its suspected Sybil identities do not influence $p_i$ and thus $\mathbf{E}[Y] \le \mathbf{E}[Y']$.
					\end{enumerate}
				\end{enumerate}
				Combining these cases, we conclude $\mathbf{E}[Y] \ge \mathbf{E}[Y']$.
			\end{enumerate}
			
			From $\mathbf{E}[X] \ge \mathbf{E}[X']$ and $\mathbf{E}[Y] \le \mathbf{E}[Y']$, it follows that $\mathbf{E}[u] \ge \mathbf{E}[u']$. Hence, withholding invitations or creating Sybil identities does not increase buyer $i$'s utility.
			\item \textbf{Misreporting valuation.} Since a bid does not affect the probability of being mapped to a path graph---it only influences the outcome when the PDM is executed---and because truthful reporting maximizes a buyer's utility within the PDM, it follows that truthful reporting also maximizes the buyer's utility in SP‑MUPDM.
		\end{enumerate}
		
		In summary, the first buyer in each path graph will report truthfully.
		
		%接下来我们分以下三点证明路径图中非首位买家i会诚实报告。第一，不邀请邻居j。根据邻居的属性我们分两种情况。(1) d(j) <= d(i)。此时这条边在G'中会被剔除，因此不会对i的效用产生影响。(2) d(j) > d(i)。由于机制通过宽度优先对每个节点依次进行映射，因此i早于j被映射，根据PDM知道，j不会对i的效用产生影响。综上，不邀请邻居j不会使i的效用增加。
		%第二，创造女巫身份。由于i的女巫身份会进入到和i相同的路径图中，并且不会改变比i离s更近的节点与s的距离，因此创造女巫身份不会使i的效用增加。
		%第三，虚报估值。由于估值不会影响映射到路径图的概率，因此虚报估值不会使i的效用增加。
		%综上，诚实报告会使效用最大化。
		Next, we prove that a non‑first buyer $i$ in a path graph will report truthfully by examining the following three aspects.
		
		\begin{enumerate}[label=(\arabic*)]
			\item \textbf{Strategic non-invitation.} We distinguish two cases based on the neighbor $j$’s position.
			\begin{enumerate}[label=(\arabic{enumi}.\arabic*)]
				\item $d(j) \le d(i)$. This edge is removed in the pruned graph $G'$; consequently, it cannot affect $i$'s utility.
				\item $d(j) > d(i)$. Because the mechanism maps nodes via breadth‑first traversal, $i$ is mapped before $j$. By the properties of PDM, $j$ does not influence $i$'s utility.
			\end{enumerate}
			Hence, withholding invitations does not increase $i$'s utility.
			\item \textbf{Creating Sybil identities.} Any Sybil identity of $i$ is placed in the same path graph as $i$, and it does not alter the distances from $s$ to nodes that are closer to $s$ than $i$. Therefore, creating Sybil identities does not increase $i$’s utility.
			\item \textbf{Misreporting valuation.} A buyer’s reported valuation does not affect the probability of being assigned to a particular path graph. Hence, misreporting valuation does not increase $i$'s utility.
		\end{enumerate}
		Since none of the three possible deviations can improve $i$'s utility, truthful reporting is a dominant strategy for any non‑first buyer in a path graph.
		
		Based on the above two points, SP-MUPDM is Sybil-proof. Therefore, MUPDM is WBB, IR, IC, and SP. \qed
	\end{proof}
	
	%在本节的最后，我们说明SP-MUPDM的时间复杂度。在机制3中，5-17行用于计算G'，需要O(|E\cup r_s|)时间（由于是连通图，因此|V|=O(|E'\cup r_s|)）。第18行用于计算支配树，时间复杂度为$O(n \log n + |E'\cup r_s|)$。19-31行用于生成每个路径图，等于图的宽度优先搜索的时间O(|E\cup r_s|)。32-40行用于计算每个路径图中首个买家的支付，需要O(k\dot |E\cup r_s|)时间，因为第35行最坏可能需要O(|E\cup r_s|)时间。综上，SP-MUPDM需要O(n \log n + k\dot |E\cup r_s|)时间，其中$k = \min \{ m, |r_s|\}$。
	At the end of this section, we analyze the time complexity of SP‑MUPDM. In Mechanism \ref{algSP}, lines 5--17 compute $G'$, requiring $O(|E\cup r_s|)$ time (since the graph is connected, $n=O(|E'\cup r_s|)$). Line 18 computes the dominator tree with time complexity $O(n \log n + |E'\cup r_s|)$. Lines 19--31 generate each path graph, which is equivalent to performing a breadth‑first search on the graph and takes $O(|E\cup r_s|)$ time. Lines 32--40 compute the payment of the first buyer in each path graph, requiring $O(k\dot |E\cup r_s|)$ time because line 35 may need $O(|E\cup r_s|)$ time in the worst case. In summary, SP‑MUPDM runs in $O(n \log n + k\dot |E\cup r_s|)$ time, where $k = \min \{ m, |r_s|\}$.
	
	\section{Conclusions and Future Work} \label{sec6}
	%本文首先考虑了在路径图上的机制设计，并提出了PDM。在此基础上，文章将机制推广到一般的网络上并设计了f-PDM。它是IR, IC, WBB的，同时可以保证非平凡的近似社会福利和收入。当f满足适当条件时，f-PDM能够满足SP或CP。此外，我们考虑了将单物品拍卖扩展为多物品拍卖，并在PDM的基础上设计了MUPDM，它满足IR、IC、WBB和近似社会福利。进一步地，我们考虑了多物品拍卖下的女巫攻击问题，并设计出了能够保证SP的SP-MUPDM。
	The paper first proposes the Probabilistic Diffusion Mechanism (PDM) for path graphs and then generalizes it to arbitrary networks via a map $f$, yielding the family of $f$-PDM mechanisms. This family preserves individual rationality (IR), incentive compatibility (IC), and weak budget balance (WBB), while providing constant-factor approximations to both social welfare and revenue. When $f$ satisfies certain conditions (e.g., breadth‑first order), $f$-PDM additionally achieves Sybil‑proofness (SP) or collusion‑proofness (CP). Furthermore, we extend the framework to multi‑item settings and design the Multi‑Unit PDM (MUPDM), which also satisfies IR, IC, WBB and guarantees approximate efficiency. To resist Sybil attacks in multi‑item auctions, we further develop the Sybil‑Proof MUPDM (SP‑MUPDM).
	
	%对于未来工作，我们计划从以下三个方面开展。第一，将机制扩展到有多个卖家或者买家有多个物品的需求的场景。多个卖家和买家需要多个物品是更一般的情况，是否可以以类似的方式设计在这些场景下满足IR、IC、WBB、SP和近似社会福利的机制？第二，设计满足更多性质的机制。本文设计了满足近似社会福利的机制，那么是否存在满足纯近似社会福利的机制呢？除此之外，例如设计满足公平性等性质也是一个有趣的方向。第三，考虑特定网络结构下的机制。在设计对任意社会网络结构都满足特定性质的机制往往是困难的，那么是否可以通过对社会网络的结构加以限制，设计出更简洁、优雅的机制呢？
	For future work, we plan to proceed along the following three directions. First, extending the mechanism to scenarios with multiple sellers or multi-demand multi-unit diffusion auctions. Multiple sellers and buyers requiring multiple items represent a more general setting. Can we design mechanisms in these contexts that satisfy IR, IC, WBB, SP, and approximate social welfare in a similar manner? Second, designing mechanisms that satisfy additional desirable properties. The paper has developed mechanisms achieving $(\frac{1}{2 \delta}, \delta)$-approximate efficiency; a natural follow-up is whether mechanisms achieving $(\epsilon, 0)$-efficiency exist. Furthermore, incorporating other properties such as fairness constitutes another interesting direction. Third, investigating mechanism design under specific network structures. Designing mechanisms that guarantee certain properties for arbitrary social network structures is often challenging. Would it be possible to impose reasonable restrictions on the network topology, leading to simpler and more elegant mechanisms? 
	
\section*{Acknowledgements}
	This work was supported by the National Natural Science Foundation of China under Grant 62572007 and Key Laboratory of Interdisciplinary Research of Computation and Economics (Shanghai University of Finance and Economics), Ministry of Education.

\bibliographystyle{splncs04}
\bibliography{ref}

\end{document}